\def\calD{{\mathcal D}}
\def\calI{{\mathcal I}}
\def\calJ{{\mathcal J}}
\def\calP{{\mathcal P}}
\def\calY{{\mathcal Y}}
\def\pol#1{\langle #1 \rangle}
\def\ncp#1#2{#1\langle #2\rangle}
\def\ncs#1#2{#1\langle\langle #2\rangle\rangle}
\def\KY{\K \langle Y \rangle}
\def\scal#1#2{\langle #1\bv#2 \rangle}
\def\bv{\mid}
\def\Haus{\operatorname{\mathcal H\text{$aus$}}}
\def\Iplus{\operatorname{I_+}}
\def\conc{{\tt conc}}
\gdef\stufflemin{\;%
  \setlength{\unitlength}{0.0125cm}%
  \begin{picture}(20,10)(220,580)
  \thinlines
  \put(220,592){\line( 0,-1){ 10}}
  \put(220,582){\line( 1, 0){ 20}}
  \put(240,582){\line( 0, 1){ 10}}
  \put(225,587){\line( 1, 0){ 10}}
  \end{picture}\;
}
\gdef\svuffle{\;%
  \setlength{\unitlength}{0.0125cm}%
  \begin{picture}(20,10)(220,580)
  \thinlines
  \put(220,592){\line( 0,-1){ 10}}
  \put(220,582){\line( 1, 0){ 20}}
  \put(240,582){\line( 0, 1){ 10}}
  \put(230,587){\circle{5}}
  \end{picture}\;
}
\newcommand{\calA}{{\mathcal A}}
\newcommand{\calH}{{\mathcal H}}
\newcommand{\calC}{{\mathcal C}}
\newcommand{\calS}{{\mathcal S}}
\newcommand{\N}{{\mathbb N}}
\newcommand{\Q}{{\mathbb Q}}
\newcommand{\K}{{\mathbb K}}
\newcommand{\Frac}[2]{\displaystyle \frac{#1}{#2}}
\newcommand{\Sum}[2]{\displaystyle{\sum_{#1}^{#2}}}
\def\pol#1{\langle #1 \rangle}
\def\Lyn{\operatorname{\mathcal L\text{$yn$}}}
 \def\shuffle{\mathop{_{^{\sqcup\!\sqcup}}}} 
\gdef\stuffle{\;%
  \setlength{\unitlength}{0.0125cm}%
  \begin{picture}(20,10)(220,580) 
  \thinlines 
  \put(220,592){\line( 0,-1){ 10}} 
  \put(220,582){\line( 1, 0){ 20}} 
  \put(240,582){\line( 0, 1){ 10}} 
  \put(230,592){\line( 0,-1){ 10}} 
  \put(225,587){\line( 1, 0){ 10}} 
  \end{picture}\; 
}
\newtheorem{corollary}{Corollary}
\newtheorem{proposition}{Proposition}
\newtheorem{theorem}{Theorem}
\newtheorem{lemma}{Lemma}
\newtheorem{definition}{Definition}
\theoremstyle{remark}
\newtheorem{remark}{Remark}
\newtheorem{example}{Example}
\newcommand{\serie}[2]{#1 \langle \! \langle #2 \rangle \! \rangle}
\def\pol#1{\langle #1 \rangle}
\def\KY{\K \langle Y \rangle}
\newcommand{\calB}{{\mathcal B}}
\def\Sum{\displaystyle\sum}
\def\Frac{\displaystyle\frac}
\def\path{\rightsquigarrow}
\def\bv{\mid}
\gdef\minishuffle{{\scriptstyle \shuffle}}  
\gdef\ministuffle{{\scriptstyle \stuffle}}
\def\supp{\mathop\mathrm{supp}\nolimits}
\begin{document}

\newbox\Adr
\setbox\Adr\vbox{
\centerline{$^\dagger$Universit\'e Paris XIII, 1, 93430 Villetaneuse, France}
\centerline{$^\ddagger$LIPN - UMR 7030, CNRS, 93430 Villetaneuse, France}
\centerline{$^*$Universit\'e Lille II, 1, Place D\'eliot, 59024 Lille, France}
}

\newbox\Foot
\setbox\Foot\vbox{\footnotesize\noindent$^1$ 
The present work is part of a series of papers devoted to the study of the renormalization
of divergent polyzetas (at positive and at non-positive indices) via the factorization
of the non-commutative generating series of polylogarithms and of harmonic sums,
and via the effective construction of pairs of dual bases in duality in $\varphi$-deformed shuffle algebras.
It is a sequel to \cite{BDHMT}, and its content was presented in several seminars and meetings, including the 74th S\'eminaire Lotharingien de Combinatoire.}

\title[(Pure) transcendence bases in $\varphi$-deformed shuffle bialgebras]
{(Pure) transcendence bases in $\varphi$-deformed shuffle bialgebras$^1$\footnote{\kern-13pt\box\Foot}}

\author[V.C. Bui, G.H.E. Duchamp, Q.H. Ng\^o, V. Hoang Ngoc Minh, and C. Tollu]
{V.C. Bui$^{\ddagger}$, G.H.E. Duchamp$^{\dagger\,\ddagger}$, 
Q.H. Ng\^o$^{\ddagger}$, V. Hoang Ngoc Minh$^{*\,\ddagger}$, 
and C.~Tollu$^{\dagger\,\ddagger}$\\\\[50pt]\box\Adr}

\begin{abstract}
Computations with integro-differential operators are often carried out in an associative algebra with unit,
and they are essentially non-commutative computations. By adjoining a cocommutative co-product,
one can have those operators 
act on a bialgebra isomorphic to an enveloping algebra. 
That gives an adequate framework
for a computer-algebra implementation via monoidal factorization, (pure) transcendence bases and Poincar\'e--Birkhoff--Witt bases.

In this paper, we systematically study these deformations, obtaining necessary and sufficient conditions for the operators to exist, and we give the most general cocommutative deformations of the shuffle co-product
and an effective construction of pairs of bases in duality. The paper ends by the combinatorial setting of local systems of coordinates on the group of group-like series.
\end{abstract}

\maketitle

\addtocounter{footnote}{1}

\tableofcontents

\thispagestyle{myheadings}
\font\rms=cmr8 
\font\its=cmti8 
\font\bfs=cmbx8

\markright{\its S\'eminaire Lotharingien de
Combinatoire \bfs 74 \rms (2018), Article~B74f\hfill}
\def\thepage{}

\section{Introduction}

The shuffle product first appeared in 1953 in the work of Eilenberg and Mac Lane \cite{EML}. As soon as 1954, Chen used it to express the product of iterated (path) integrals \cite{Chen1954}, and Ree, building on Friedrichs' criterion, proved that the non-commutative generating series of iterated integrals are exponentials of Lie polynomials, thus connecting the Lie polynomials with the shuffle product \cite{Ree}. In 1956, Radford proved that the Lyndon words form a (pure) transcendence basis of the shuffle algebra \cite{radford}. The latter result is now well understood through the duality between bialgebras and enveloping algebras (see for example \cite{reutenauer}), of which the construction in 1958 of the Poincar\'e--Birkhoff--Witt\footnote{From now on, Poincar\'e--Birkhoff--Witt will be abbreviated to PBW.}--Lyndon basis by Chen, Fox and Lyndon \cite{lyndon} and of its dual basis by Sch\"ut\-zenberger, via monoidal factorization \cite{Shutz1958,reutenauer}, gave a striking illustration. This pair of dual bases enabled 
one to factorize the diagonal series in the shuffle bialgebra and, consequently, to proceed combinatorially with the Dyson series \cite{FPSAC96} or the transport operator \cite{hoangjacoboussous}, which play a leading role in the relations between special functions involved in the theory of quantum groups \cite{kassel} and in number theory \cite{cartier}.

In 1973, that is, within twenty years of the introduction of the shuffle product, Knutson defined the quasi-shuffle in \cite{knutson}, where it 
shows up as the inner product of functions on the symmetric groups\footnote{In the present paper, that product will be referred to as the quasi-shuffle or as the stuffle product, indifferently.}. This product is very similar to the Rota--Baxter operator introduced by Cartier in 1972, in his study of the so-called Baxter algebras \cite{cartier72}. Although the analogue of Radford's theorem was pointed out by Malvenuto and Reutenauer \cite{MalvenutoReutenauer},
the factorization of the diagonal series in the quasi-shuffle bialgebra, initiated in \cite{acta,VJM}, has not yet been carried over to more general bialgebras.

\pagenumbering{arabic}
\addtocounter{page}{1}
\markboth{\SMALL V.C. BUI, G.H.E. DUCHAMP, Q.H. NG\^O, V. HOANG NGOC MINH, AND C. TOLLU}{\SMALL (PURE) TRANSCENDENCE BASES IN $\varphi$-DEFORMED SHUFFLE
BIALGEBRAS}

Sch\"utzenberger's factorization\footnote{Also called MSR factorization after the names of M\'elan\c{c}on, 
Sch\"utzenberger and Reutenauer.} \cite{reutenauer} and its extensions have since been applied to the renormalization of the associators \cite{acta,VJM}, 
to which matter they turned out to be central\footnote{These associators, which are formal power series in non-commutative variables, were introduced in quantum field theory by Drinfel'd \cite{drinfeld2}. The explicit coefficients of the universal associator $\Phi_{KZ}$ are polyzetas and regularized polyzetas \cite{lemurakami}.}. 

The coefficients of these power series are polynomial functions of positive integral multi-indices of Riemann's zeta function\footnote{These values are usually referred to as MZV's by Zagier \cite{zagier} and as polyzetas by Cartier \cite{cartier}.}  \cite{lemurakami,zagier}, and they satisfy quadratic relations \cite{cartier} which Lyndon words help explicit and explain.
The latter relations 
can be obtained by identifying the local coordinates on a bridge equation connecting the Cauchy and the Hadamard algebras of polylogarithmic functions, and by using the factorization of the non-commutative generating series of polylogarithms \cite{FPSAC99} and of harmonic sums \cite{acta,VJM}. This bridge equation is mainly a consequence of the isomorphisms between the algebra of non-commutative generating series of polylogarithms and the shuffle algebra on 
the one hand,
and between the algebra of non-commutative generating series of harmonic sums and the quasi-shuffle algebra on the other hand.

As for the generalization of Sch\"utzenberger's factorization to more general bialgebras, the key step, and the main difficulty thereof, is to decompose 
orthogonally such bialgebras into the Lie algebra generated by its primitive elements and the associated orthogonal ideal, as Ree was able to achieve in the case of the shuffle bialgebra \cite{Ree}, and to construct, whenever possible, the respective bases. In favorable cases, 
it is to be hoped
that those bialgebras are enveloping algebras, so that the Eulerian projectors are convergent and other analytic computations can be performed.

To make that decomposition possible, one first needs to determine the Eulerian projectors by taking the logarithm of the diagonal series
and second to insure their convergence. A key ingredient is the fact that the diagonal series are group-like and give a host
of group-like elements by specialization, so one can use the exponential-logarithm correspondence to compute within a combinatorial Hausdorff group.

To that effect, the present work generalizes the recursive definitions of the shuffle and quasi-shuffle products given
by Fliess \cite{fliess} and Hoffman \cite{hoffman}, 
respectively, to introduce the $\varphi$-deformed shuffle product,
where $\varphi$ stands for an arbitrary algebra law. Recent studies on these structures can be found in \cite{Dutopeko,Manchon,Patras}.
 
These $\varphi$-shuffle products interpolate between the classical shuffle and quasi-shuffle products (for $\varphi\equiv0$ and $\varphi\equiv1$, respectively), and allow a classification of the associated bialgebras.

This paper is devoted to the combinatorics of $\varphi$-deformed shuffle algebras and to 
the effective constructions of pairs of dual bases. 
Its organisation is as follows:
\begin{itemize}
\item Section~\ref{primer} is a short reminder of well-known facts about the combinatorics of the $q$-stuffle product \cite{BDM},
which encompasses the shuffle \cite{reutenauer} and the quasi-shuffle products \cite{acta,VJM}.
\item In Section~\ref{algebra}, we thoroughly investigate algebraic and combinatorial aspects of the $\varphi$-deformed shuffle products and explain how to use bases in duality to get a local system of coordinates on the (infinite-dimensional) Lie group of group-like series. 
\end{itemize}

Throughout the paper, we have a particular concern for Lie series and their correspondence with the Hausdorff group.

\section{A survey of shuffle products}\label{primer}
For standard definitions and facts pertaining to the (algebraic) combinatorics on words, we refer the reader to the classical books by Lothaire \cite{lothaire} and Reutenauer \cite{reutenauer}.

Throughout the paper, $\K$ stands for a (unital, associative and commutative) $\Q$-algebra containing a parameter $q$. In this section, we review the known combinatorics of bases in duality and local coordinates on the infinite-dimensional Lie group of group-like series (Hausdorff group). The parameter $q$ allows for a unified treatment between shuffle ($q=0$) and quasi-shuffle ($q=1$) products. 

Let $Y=\{y_i\}_{i\ge1}$ be an alphabet, totally ordered by $y_1>y_2>\cdots$.
The free monoid and the set of Lyndon words over $Y$ are denoted by $Y^*$ and $\Lyn Y$, respectively.
The unit of $Y^*$ is denoted by $1_{Y^*}$. We also write $Y^+=Y^*\setminus\{1_{Y^*}\}$.

The $q$-stuffle \cite{BDM}, which interpolates between the shuffle \cite{Ree}, quasi-shuffle \cite{MalvenutoReutenauer}
(or stuffle) and minus-stuffle products \cite{JSC}, for 
$q=0$, $1$, and $-1$, respectively, is defined as follows:
\begin{align}
   u\stuffle_q 1_{Y^*}&=1_{Y^*}\stuffle_q u=u,\\
   y_su\stuffle_q y_tv&=y_s(u\stuffle_q y_tv)+y_t(y_su\stuffle_q v) +q\,y_{s+t}(u\stuffle_q v),
\end{align}
or its dual co-product, as follows, for any $y_s,y_t\in Y$ and $u,v\in Y^*$,
 \begin{align}
   \Delta_{\stuffle_q}(1_{Y^*})&=1_{Y^*}\otimes1_{Y^*},\\
   \Delta_{\stuffle_q}(y_s)&=y_s\otimes 1_{Y^*} +1_{Y^*}\otimes y_s +q\sum_{s_1+s_2=s}{y_{s_1}\otimes y_{s_2}}.
 \end{align}

We now turn to the study of the combinatorial $q$-stuffle Hopf algebra, which we do by stressing the importance of the Lie elements\footnote{Following Ree \cite{Ree},
the Lie elements contain the non-commutative power series which are Lie series (as the Chen non-commutative
generating series of iterated integrals), i.e., they are group-like for the co-product of the shuffle.} studied by Ree \cite{Ree}, and show how Sch\"utzenberger's factorization extends to this new structure.

The $q$-stuffle is commutative, associative and unital. With the co-unit defined by $\epsilon(P)=\scal{P}{1_{Y^*}}$, for $P\in\KY$,
we get 
$$\calH_{{\stuffle_q}}=(\KY,{\tt
  conc},1_{Y^*},\Delta_{{\stuffle_q}},\epsilon)$$ 
and 
$$\calH_{{\stuffle_q}}^{\vee}=(\KY,{\stuffle_q},1_{Y^*},\Delta_{\tt
  conc},\epsilon)$$ 
which are mutually dual bialgebras and, in fact, Hopf algebras because they are $\N$-graded by the weight.

Let $\calD_Y$ be the diagonal series over $\calH_{{\stuffle_q}}$, i.e.,
\begin{equation}
\calD_Y=\sum_{w\in Y^*}w\otimes w.
\end{equation}
Then\footnote{The diagonal series lives in $\ncs{\K}{Y^*\otimes Y^*}\simeq (\ncp{\K}{Y}\otimes\ncp{\K}{Y})^*$.}
\begin{equation}
\log\calD_Y=\sum_{w\in Y^+}w\otimes\pi_1(w),
\end{equation}
where $\pi_1$ is the extended Eulerian projector\footnote{In fact,
  $\pi_1$ is a projector which maps $\calH_{{\stuffle_q}}$ onto the
  space of its primitive elements $Prim(\calH_{{\stuffle_q}})$, see
  Lemma~\ref{primitive_proj}.} over $\calH_{{\stuffle_q}}$, defined by
(see \cite{BDM})
\begin{equation}
\pi_1(w)=w+\sum_{k\ge2}\frac{(-1)^{k-1}}{k}
\sum_{u_1,\ldots,u_k\in Y^+}\langle w\bv u_1{\stuffle_q}\cdots{\stuffle_q} u_k\rangle u_1\ldots u_k.
\end{equation}

Let $\{\Pi_l\}_{l\in\Lyn Y}$ be defined by
\begin{equation}\label{pi_l0}
\begin{cases}
\Pi_y=\pi_1(y),&\mbox{for }y\in Y,\\
\Pi_{l}=[\Pi_s,\Pi_r],&\mbox{for the standard factorization $(s,r)$ of }l\in \Lyn Y-Y.
\end{cases}
\end{equation}
Then it forms a basis of the Lie algebra of primitive elements of
$\calH_{{\stuffle_q}}$ (see \cite{BDM}).

Let $\{\Pi_w\}_{w\in Y^*}$ be defined, for any $w\in Y^*$ such that $w=l_1^{i_1}\ldots l_k^{i_k}$
with $l_1>\ldots>l_k$ and $l_1\ldots,l_k\in\Lyn Y$, by
\begin{equation}
\Pi_{w}=\Pi_{l_1}^{i_1}\ldots \Pi_{l_k}^{i_k}.
\end{equation}
Then, by the PBW theorem, the set $\{\Pi_w\}_{w\in Y^*}$ is a basis of
$\KY$ (see \cite{BDM}).

Let $\{\Sigma_w\}_{w\in Y^*}$ be the family dual\footnote{The duality pairing is given by $\scal{u}{v}=\delta_{u,v}$, for $u,v\in Y^*$.} to $\{\Pi_w\}_{w\in Y^*}$ in the quasi-shuffle algebra.
Then\break $\{\Sigma_w\}_{w\in Y^*}$ freely generates the
quasi-shuffle algebra, and the subset $\{\Sigma_l\}_{l\in\Lyn Y}$
forms a transcendence basis of 
$(\KY,\stuffle_q,1_{Y^*})$. The $\Sigma_w$ can be obtained as follows
(see \cite{BDM}):
\begin{equation}\label{recurY-bis}
\begin{cases}
\Sigma_y=y,&\mbox{for }y\in Y,\\
\Sigma_l\;=\displaystyle
\sum_{(!)}\frac{q^{i-1}}{i!}y_{s_{k_1}+\cdots+s_{k_i}}\Sigma_{l_1\cdots l_n},&\mbox{for }l=y_{s_1}\ldots y_{s_k}\in\Lyn Y,\\
\Sigma_w=\displaystyle\frac{\Sigma_{l_1}^{\stuffle_q i_1}\stuffle_q\cdots\stuffle_q\Sigma_{l_k}^{\stuffle_q i_k}}{i_1!\cdots i_k!},
&\mbox{for }w=l_1^{i_1}\ldots l_k^{i_k},
\end{cases}
\end{equation}
and $l_1\succ_{lex}\cdots\succ_{lex} l_k \in \Lyn Y$. In the second 
expression 
of (\ref{recurY-bis}),
the sum  $(!)$ is taken over all subsequences $\{k_1,\ldots,k_i\}\allowbreak \subset \allowbreak\{1,\ldots,k\}$
and all Lyndon words $l_1\succeq_{lex}\cdots\succeq_{lex} l_n$ such that
$(y_{s_1},\ldots,\allowbreak y_{s_k})\stackrel{*}{\Leftarrow}(y_{s_{k_1}},\ldots,y_{s_{k_i}},l_1,\ldots,l_n)$,
where $\stackrel{*}{\Leftarrow}$ denotes the transitive closure of the relation on standard  sequences,
denoted by $\Leftarrow$  (see \cite{BDM}).

In this case, since $\{\Pi_w\}_{w\in Y^*}$ and $\{\Sigma_w\}_{w\in
  Y^*}$ are multiplicative, we get the $q$-extended 
Sch\"utzenberger's factorization as follows (see \cite{BDM}):
\begin{equation}
\calD_Y=\sum_{w\in Y^*}\Sigma_w\otimes\Pi_w=\prod_{l\in\Lyn Y}^{\searrow}\exp(\Sigma_l\otimes\Pi_l).
\end{equation}
This series, in the factorized form, encompasses a large part of the combinatorics of Dyson's functional expansions in quantum field theory
\cite{dyson,magnus}. It is the infinite-dimensional analogue of the theorem of Wei and Norman \cite{Bui,weinorman1,weinorman2}.

\section{Algebraic aspects of $\varphi$-shuffle bialgebras}\label{algebra}
From now on, we will work with an alphabet $Y=\{y_i\}_{i\in I}$ with $I$ an arbitrary index set\footnote{The indexing is one-to-one, i.e., there is no repetition.}, which needs not be totally ordered unless we write it explicitly.
\subsection{First properties}
Let us consider the following recursion in order to construct a map 
\begin{equation}
Y^*\times Y^*\longrightarrow \KY.
\end{equation}
\begin{itemize}
\item[i)] For any $w\in Y^*$,
\begin{eqnarray}
\hspace{-2cm} (\mathrm{Init}) && \ 1_{Y^*}\ministuffle_{\varphi}w=w\ministuffle_{\varphi}1_{Y^*}=w.
\end{eqnarray}
\item[ii)] For any $a, b\in Y$ and $u,v\in Y^*$, 
\begin{eqnarray}
(\mathrm{Rec})&&au\ministuffle_{\varphi}bv 
=a(u\ministuffle_{\varphi}bv)+b(au\ministuffle_{\varphi}v)+\varphi(a,b)(u\ministuffle_{\varphi} v),
\end{eqnarray}
where $\varphi$ is an arbitrary mapping defined by its structure coefficients
\begin{eqnarray}\label{phidef}
\varphi:Y\times Y&\longrightarrow&\K Y,\\
(y_i,y_j)&\longmapsto&\sum_{k\in I}\gamma_{y_i,y_j}^{y_k}\;y_k.
\end{eqnarray}
\end{itemize}
The following proposition guarantees the existence of a unique bilinear law on $\ncp{\K}{Y}$ satisfying $(\mathrm{Init})$ and $(\mathrm{Rec})$. 
\begin{proposition}[\cite{BDHMT}]
The recursion $(\mathrm{Rec})$ together with the initialization $(\mathrm{Init})$ defines a unique mapping
\begin{eqnarray*}
\stuffle_\varphi\,: Y^*\times Y^*&\longrightarrow&\KY,
\end{eqnarray*}
which can, at once, be extended by linearity as a law
\begin{eqnarray*}
\stuffle_\varphi\,: \KY\otimes\KY&\longrightarrow&\KY.
\end{eqnarray*}
\end{proposition}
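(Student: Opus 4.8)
The plan is to construct the map on $Y^*\times Y^*$ by induction on the sum of the lengths $|u|+|v|$ and then extend bilinearly. First I would set up the induction: the base cases are governed by $(\mathrm{Init})$, which \emph{defines} $u\ministuffle_\varphi v$ whenever at least one of $u,v$ equals $1_{Y^*}$, and in particular pins down the value when $|u|+|v|\le 1$. For the inductive step, suppose $u\ministuffle_\varphi v$ has been unambiguously defined for all pairs with $|u|+|v|<n$, and take a pair with $|u|+|v|=n\ge 2$. If either word is empty we are in the $(\mathrm{Init})$ case; otherwise write $u=au'$ and $v=bv'$ with $a,b\in Y$ and $u',v'\in Y^*$, and \emph{define} the product by the right-hand side of $(\mathrm{Rec})$, namely $a(u'\ministuffle_\varphi v)+b(u\ministuffle_\varphi v')+\varphi(a,b)(u'\ministuffle_\varphi v')$. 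Each of the three products appearing there involves a pair whose total length is strictly smaller than $n$, so by the induction hypothesis they are already defined elements of $\KY$; since $\varphi(a,b)\in\K Y$ and concatenation $a(\cdot)$, $b(\cdot)$ are $\K$-linear maps $\KY\to\KY$, the right-hand side is a well-defined element of $\KY$.

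The key point to verify is that this definition is \emph{unambiguous}, i.e. that the factorizations $u=au'$ and $v=bv'$ used in the recursion are forced. This is immediate from the freeness of the monoid $Y^*$: a nonempty word has a unique first letter, so $a$, $u'$, $b$, $v'$ are uniquely determined by $u$ and $v$, and there is no overlap with the $(\mathrm{Init})$ clause because that clause only applies when one argument is empty. Hence at each stage exactly one defining clause applies, and the induction produces a genuinely well-defined function $\stuffle_\varphi\colon Y^*\times Y^*\to\KY$. Uniqueness of the map follows by the same induction: any map satisfying $(\mathrm{Init})$ and $(\mathrm{Rec})$ must agree with the one just built, because its value on each pair is forced by its values on strictly shorter pairs.

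Finally I would extend $\stuffle_\varphi$ to a bilinear law $\KY\otimes\KY\to\KY$. Since $Y^*$ is a $\K$-basis of $\KY$, the function already defined on basis pairs extends uniquely to a $\K$-bilinear map by the universal property of the tensor product: one sets $\bigl(\sum_u \alpha_u u\bigr)\ministuffle_\varphi\bigl(\sum_v \beta_v v\bigr)=\sum_{u,v}\alpha_u\beta_v\,(u\ministuffle_\varphi v)$, the sums being finite. This extension is automatic and requires no further compatibility check.

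I do not expect any serious obstacle here; the statement is a standard recursion-on-length existence-and-uniqueness argument. The only subtlety worth stating explicitly is the well-definedness of the recursive clause, which hinges on the two observations that the three subproducts on the right of $(\mathrm{Rec})$ strictly decrease $|u|+|v|$ and that the leading-letter decomposition of each nonempty argument is unique. No associativity, commutativity, or compatibility with $\varphi$ is claimed at this stage, so those (potentially delicate) properties are deferred and play no role in the proof of existence and uniqueness.
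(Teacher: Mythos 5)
Your argument is correct: induction on $|u|+|v|$, with well-definedness guaranteed by the unique first-letter factorization in the free monoid $Y^*$ and by the strict decrease of total length in each of the three terms of $(\mathrm{Rec})$, followed by the automatic bilinear extension off the basis $Y^*\times Y^*$. The paper itself gives no proof of this proposition (it is quoted from \cite{BDHMT}), but what you wrote is exactly the standard argument intended there, and you are right to flag that no associativity, commutativity or dualizability of $\varphi$ is needed at this stage --- those are treated separately in Theorem \ref{delta_phi}.
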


The space $\ncp{\K}{Y}$ endowed with the law $\stuffle_\varphi$ is an
algebra (with unit $1_{\ncp{\K}{Y}}$ by definition). It  will be
called the $\varphi$-shuffle algebra. In full generality, this algebra
need not be associative or commutative if $\varphi$ is not so. In the
next example, we give a table of well known laws which can be defined 
according to this pattern (in which $\varphi$ is reasonable).    

\begin{example}\label{tab1} Below, a summary table of
  $\varphi$-deformed cases found in the literature
is given. The last case
  (infiltration product) comes from computer science (see
  \cite{infiltr1,infiltr2,direct_dual}) 
$$\begin{tabular}{|c|c|c|}
\hline
\hskip-2mm Name\hskip-6mm&\hskip-2mm (recursion) Formula\hskip-6mm&\hskip-2mm   $\varphi$\cr
\hline
\hline
\hskip-2mm Shuffle\hskip-6mm&\hskip-2mm   $au\shuffle bv=a(u\shuffle bv)+b(au\shuffle v)$\hskip-6mm&\hskip-2mm   $\varphi\equiv 0$\cr
\hline
\hskip-2mm Quasi-shuffle\hskip-6mm&\hskip-2mm   $x_iu\stuffle x_jv=x_i(u\stuffle x_jv)+x_j(x_iu\stuffle v)$\hskip-6mm&\hskip-2mm   $\varphi(x_i,x_j)=x_{i+j}$\cr
Stuffle&$\phantom{x_iu\stuffle x_jv=}+x_{i+j}(u\stuffle v)$\hskip-3mm&\hskip-1mm\cr
\hline
\hskip-2mm Min-shuffle\hskip-6mm&\hskip-2mm   $x_iu\stufflemin x_jv=x_i(u\stufflemin x_jv)+x_j(x_iu\stufflemin v)$\hskip-6mm&\hskip-2mm   $\varphi(x_i,x_j)=- x_{i+j}$\cr
&$\phantom{x_iu\stufflemin x_jv=} - x_{i+j}(u\stufflemin v)$\hskip-3mm&\hskip-1mm\cr
\hline
\hskip-2mm Muffle\hskip-6mm&\hskip-1mm $x_iu\svuffle x_jv=x_i(u\svuffle x_jv)+x_j(x_iu\svuffle v)$\hskip-6mm&\hskip-1mm $\varphi(x_i,x_j)=x_{i\times j}$\cr
&$\phantom{x_iu\svuffle x_jv=}+x_{i\times j}(u\svuffle v)$\hskip-3mm&\hskip-1mm\cr
\hline
\hskip-2mm $q$-stuffle\hskip-6mm&\hskip-2mm   $x_iu\stuffle_q x_jv=x_i(u\stuffle_q x_jv)+x_j(x_iu\stuffle_q v)$\!\!&\!\! $\varphi(x_i,x_j)=qx_{i+j}$\cr
&$\phantom{x_iu\shuffle x_jv=x}+q x_{i+j}(u\shuffle v)$\hskip-3mm&\hskip-1mm\cr
\hline
\hskip-2mm $q$-stuffle\hskip-6mm&\hskip-2mm   $x_iu\shuffle_q x_jv=x_i(u\shuffle_q x_jv)+x_j(x_iu\shuffle_q v)$\hskip-6mm&\hskip-2mm   $\varphi(x_i,x_j)=q^{i\times j}x_{i+j}$\cr
(character)&$\phantom{x_iu\shuffle x_jv=}+q^{i\times j}x_{i+j}(u\shuffle v)$\hskip-3mm&\hskip-1mm\cr
\hline
\hskip-1mm ${\tiny\tt LDIAG}(1,q_s)$\hskip-3mm&\hskip-1mm\!\!&\!\!\cr
\hskip-1mm non-crossed,\hskip-6mm&\hskip-2mm   $au\ast bv=a(u\ast bv)+b(au\ast v)$\hskip-6mm&\hskip-2mm   $\varphi(a,b)=q_s^{|a||b|}(a.b)$\cr
\hskip-1mm non-shifted&$\phantom{au\shuffle bv=}+q_s^{|a||b|}(a.b) (u\ast v)$\hskip-3mm&\hskip-1mm
\hskip-3mm\hskip-1mm$\phantom{\varphi(a,b)}(a.b)$ assoc.\cr
\hline
\hskip-2mm B-shuffle\hskip-6mm&\hskip-2mm   $au\shuffle_B bv=a(u\shuffle_B bv)+b(au\shuffle_B v)$\hskip-6mm&\hskip-2mm   $\varphi(a,b)=\langle a,b\rangle$\cr
\hskip-6mm&\hskip-2mm   $\phantom{au\shuffle bv=}+\langle a,b\rangle(u\shuffle_B v)$
\hskip-3mm&\hskip-1mm$\phantom{\varphi(a,b)}=\langle b,a\rangle$\cr
\hline
\hskip-2mm Semigroup-\hskip-6mm&\hskip-2mm   $x_tu\shuffle_{\perp} x_sv=x_t(u\shuffle_{\perp} x_sv)+x_s(x_tu\shuffle_{\perp} v)$\hskip-6mm&\hskip-2mm   $\varphi(x_t,x_s)=x_{t\perp s}$\cr
\hskip-2mm -shuffle&$\phantom{x_tu\shuffle x_sv=}+x_{t\perp s}(u\shuffle_{\perp} v)$\hskip-3mm&\hskip-1mm\cr
\hline 
\hskip-2mm $q$-Infiltration\hskip-6mm&\hskip-2mm   $au\uparrow bv=a(u\uparrow bv)+b(au\uparrow v)$\hskip-6mm&\hskip-2mm   $\varphi(a,b)=q\delta_{a,b}a$\cr
\hskip-6mm&\hskip-2mm   $\phantom{au\shuffle bv=}+q\delta_{a,b}a(u\uparrow v)$
\hskip-3mm&\hskip-1mm$\phantom{\varphi(a,b)}$\cr
\hline
\end{tabular}$$
\end{example}

Now we set forth 
the first properties of $\ministuffle_{\varphi}$ (see \cite{orsay}): {\it associativity, commutativity} and {\it dualizability}. 
\begin{definition}[\cite{BDHMT}]
A law $\mu : \ncp{\K}{Y}\otimes \ncp{\K}{Y}\rightarrow \ncp{\K}{Y}$ is 
said to be {\rm dualizable} if there exists a (linear) mapping
\begin{equation*}
\Delta_{\mu}:\ncp{\K}{Y}\longrightarrow \ncp{\K}{Y}\otimes\ncp{\K}{Y}
\end{equation*}
(necessarily unique) such that the dual mapping  
\begin{equation*}
\Big(\ncp{\K}{Y}\otimes \ncp{\K}{Y}\Big)^*\longrightarrow \ncs{\K}{Y}
\end{equation*}
restricts \footnote{through the pairings $\scal{-}{-}$.} to $\mu$. 
Or, equivalently,
\begin{eqnarray*}
\text{for all } u,v,w\in Y^*:&&\scal{\mu(u\otimes v)}{w}=\scal{u\otimes v}{\Delta_{\mu}(w)}^{\otimes2}.
\end{eqnarray*}
\end{definition}
\begin{theorem}[\cite{BDHMT}]\label{delta_phi}
We have:
\begin{enumerate}
\item The law $\ministuffle_{\varphi}$ is associative (respectively commutative) if and only if the extension
$\varphi : \K Y\otimes \K Y\longrightarrow \K Y$ is so.
\item Let $\gamma_{x,y}^z:=\scal{\varphi(x,y)}{z}$ be the structure constants of $\varphi$.
Then $\ministuffle_{\varphi}$ is dualizable
if and only if $\varphi$ is also dualizable, that is to say, there
exists a map $\delta : \K Y\longrightarrow \K Y\otimes \K Y$ such that
for all $x,y,z\in X$ we have
$$
\scal{\varphi(x,y)}{z}=\scal{x\otimes y}{\delta(z)}\ .
$$
This map $\delta$ is given by\footnote{Note that all these conditions are equivalent to the fact that $(\gamma_{x,y}^z)_{x,y,z\in Y}$ satisfies:
$$
\text{for all }z\in Y:\ \#\{(x,y)\in Y^2|\gamma_{x,y}^z\not=0\}<+\infty\ .
$$} 
$$
\delta(z)=\sum_{x,y\in Y}\gamma_{x,y}^z\,x\otimes y\ .
$$
\end{enumerate}
\end{theorem}


For the proof of the theorem we need the following auxiliary result. 

\begin{lemma}[\cite{BDHMT}] \label{lem:1}
Let $\Delta$ be the 
morphism $\KY\longrightarrow\serie{A}{Y^*\otimes Y^*}$ defined on the letters by\footnote{If $\varphi$ is dualizable, this expression can be written
$$
\Delta(y_s)=y_s\otimes 1+1\otimes y_s+\delta(y_s)\ .
$$ 
}
$$
\Delta(y_s)=y_s\otimes 1+1\otimes y_s+\Sum_{n,m\in I} \gamma_{y_n,y_m}^{y_s}y_n\otimes y_m\ .
$$
Then 
\begin{enumerate}
\item for all $u,v,w\in Y^*$, $\scal{u\stuffle_\varphi v}{w}=\scal{u\otimes v}{\Delta(w)}^{\otimes2}$.
\item for all $w\in Y^+$, $\Delta(w)=w\otimes 1+1\otimes w+\Sum_{u,v\in Y^+}\scal{\Delta(w)}{u\otimes v}u\otimes v$.
\end{enumerate}
\end{lemma}

\begin{proof}[Proof of Theorem~\ref{delta_phi} (sketch)] 
The theorem follows by application of items (1) and (2) 
in Lem\-ma~\ref{lem:1}.
\end{proof}

If $\varphi$ is associative (which is fulfilled in all cases of Table~\ref{tab1}), we extend $\varphi$ to $Y^+$ 
by the universal property of the free semigroup $Y^+$,
\begin{equation}
\begin{cases}
\varphi(x)=x,
&\mbox{for }x\in Y,\\
\varphi(xw)=\varphi(x,\varphi(w)),&\mbox{for }x\in Y\mbox{ and }w\in Y^+ ,
\end{cases}
\end{equation}
and we extend the definition of the structure constants accordingly: 
for $x_1\ldots x_l\in Y^+$,
\begin{equation}
\gamma_{x_1\ldots x_l}^{y}=\scal{y}{\varphi(x_1\ldots x_l)}
=\sum_{t_1,\ldots,t_{l-2}\in Y}\gamma^{y}_{x_1,t_1}\gamma^{t_1}_{x_2,t_2}\ldots\gamma^{t_{l-2}}_{x_{l-1},x_l}.
\end{equation}

Note that the fact that $\varphi$ is dualizable can be rephrased as:
\begin{equation}\label{2dual}
\text{for all }y\in Y:\ \{w\in Y^2| \gamma_w^{y}\not=0\}\ \mbox{is finite}.
\end{equation}
In this case, it can be checked immediately that, for an arbitrarily fixed $N\geq 1$,  
\begin{equation}
\text{for all }y\in Y:\ \{w\in Y^N| \gamma_w^{y}\not=0\}\ \mbox{is finite},
\end{equation}
but, by no means, we have in general that
\begin{equation}\label{moderate}
\text{for all }y\in Y:\ \{w\in Y^+| \gamma_w^{y}\not=0\}\ \mbox{is finite}.
\end{equation}
\begin{remark}
i) Condition (\ref{moderate}) is strictly stronger than (\ref{2dual})
as the example of any group law on $Y$, with $|Y|\geq 2$ and finite,
shows.

\medskip
ii) Non-dualizable laws occur with the alphabet
$Y_{\mathbb{Z}}=\{y_j\}_{j\in \mathbb{Z}}$ and the stuffle on it\break
($\varphi(y_i,y_j)=y_{i+j}$). This alphabet naturally appears in the
theory of polylogarithms at negative integers 
in \cite{Ngo} where another non-dualizable law (called $\top$) arises. See also Example~\ref{ex_struct} below.      
\end{remark}
\begin{definition}
An associative law $\varphi$ on $\K Y$ will be 
said to be {\em moderate} if and only if it fulfils condition \eqref{moderate}. 
\end{definition}
     
Let us now state the structure theorem from \cite{BDHMT}.

\begin{theorem}[\cite{BDHMT}]\label{structure}
Let us suppose that $\varphi$ is dualizable and associative. We still denote its dual co-multiplication by
\begin{equation*}
\Delta_{\ministuffle_{\varphi}} : \KY\longrightarrow \KY\otimes \KY.
\end{equation*}
Then
$\mathcal{B}_\varphi=(\KY, {\tt conc}, 1_{Y^*}, \Delta_{\ministuffle_{\varphi}},\varepsilon) $ is a bialgebra.
If, moreover, $\varphi$ is commutative, the following conditions are equivalent:
\begin{enumerate}
\item $\mathcal{B}_\varphi$ is an enveloping bialgebra.
\item $\calB_{\varphi}$ is isomorphic to $(\KY,{\tt conc},1_{Y^*},\Delta_{\shuffle},\epsilon)$ as a bialgebra.
\item For all $y\in Y$, the 
following series is a polynomial
\begin{eqnarray*}
(P)&&y+\Sum_{l\ge2}\Frac{(-1)^{l-1}}{l}\Sum_{x_1,\ldots,x_l\in Y}\scal{y}{\varphi(x_1\ldots x_l)}\;{x_1\ldots x_l}\ .
\end{eqnarray*} 
\item $\varphi$ is moderate.
\end{enumerate}
\end{theorem}
\begin{proof}[Proof (sketch)]

\noindent
4$\ \Longrightarrow\ $3) Obvious.

\medskip
3$\ \Longrightarrow\ $2) One first constructs an endomorphism of
$(\KY, {\tt conc}, 1_{Y^*})$ sending each letter $y\in Y$ to the
polynomial form $(P)$ and 
then proves that it is an automorphism of AAU\footnote{Abbreviation for associative algebra with unit.} which sends $(\KY, {\tt conc}, 1_{Y^*}, \Delta_{\ministuffle_{\varphi}},\varepsilon)$ to $(\KY,{\tt conc},1_{Y^*},\Delta_{\shuffle},\epsilon)$.

\medskip
2$\ \Longrightarrow\ $1) 
Because
 $(\KY,{\tt conc},1_{Y^*},\Delta_{\shuffle},\epsilon)$ is an enveloping bialgebra.

\medskip
1$\ \Longrightarrow\ $4) Observe that, for each letter $y\in Y$, we have 
\begin{equation*}
\scal{\Delta_{\ministuffle_{\varphi}}^{(n-1)}(y)}{x_1\otimes x_2\otimes\cdots \otimes x_n}=\gamma_{x_1\ldots x_l}^{y}\ .
\qedhere
\end{equation*}
\end{proof}
\begin{example}\label{ex_struct}
\begin{enumerate}
\item The muffle product (see Table \ref{tab1}), which determines the product of Hurwitz polyzetas with rational centers and correspond to $\varphi(x_i,x_j)=x_{i.j}$ for $i,j\in\Q_+^*$, is not dualizable ($\gamma_{n,{1}/{n}}^1=1$ for all $n\geq 1$).
\item\label{qbelong} The $q$-infiltration bialgebra (see again Table \ref{tab1}) has its origin in computer science \cite{infiltr1, infiltr2}
and appears as a generic solution in \cite{direct_dual}. It provides a bialgebra 
\begin{equation*}
\calH_{q-\mathrm{inflitr}}=(\ncp{\K}{Y},{\tt conc},1_{X^*},\Delta_{\uparrow_q},\epsilon)
\end{equation*}
($q\in\K$) based on a $\varphi$
which is an associative, commutative and dualizable law, but,
if $Y\not=\emptyset$ this law is moderate only if and only if $q$ is nilpotent in the $\Q$-algebra $\K$. Indeed, for all $x\in Y$, $(1+qx)$ is group-like and it has an inverse in $\ncp{\K}{X}$ if and only if $q$ is nilpotent. In this case the antipode is the involutive antiautomorphism defined on the letters by 
\begin{equation*}
S(x)=\frac{-x}{1+qx}.
\end{equation*}    
\end{enumerate}
\end{example}
\subsection{Structural properties}


Here, we only assume that $\varphi$ is associative.

The bialgebra 
\begin{equation}
\calH_{{\ministuffle_{\varphi}}}^{\vee}=(\KY,{\ministuffle_{\varphi}},1_{Y^*},\Delta_{\tt conc},\epsilon)
\end{equation}    
is a Hopf algebra because it is co-nilpotent\footnote{The law $\Delta_{\tt conc}$, dual to the concatenation is, of course, defined by 
$$
\Delta_{\tt conc}(w)=\sum_{uv=w}\, u\otimes v.
$$
The corresponding $n$-fold $\Delta_{\tt conc}^+$ ($\Delta^+=\Delta$ minus the primitive part) reads 
$$
\Delta_{\tt conc}^{+(n-1)}(w)=\underset{u_i\in Y^+}{\sum_{u_1u_2\cdots u_n=w}}\, u_1\otimes u_2\otimes\cdots \otimes u_n,
$$
from which it is clear that $\Delta_{\tt conc}^{+(n-1)}(w)=0$ for $n> |w|$.}. Its antipode can be computed by $a(1_{Y^*})=1$ and, for $w\in Y^+$,
\begin{equation}\label{antipode1}
a_{\ministuffle_{\varphi}}(w)=\sum_{k\geq 1}(-1)^{-k}
\underset{u_1\ldots u_k=w}{\sum_{u_1,\ldots,u_k\in Y^+}}u_1\stuffle_\varphi\cdots\stuffle_\varphi u_k.
\end{equation}
Due to the finite number of decompositions of any word $u_1\ldots
u_k=w\in Y^+$ into factors $u_1,\ldots,u_k\in Y^+$, we can, at this
very early stage, define an endomorphism $\Phi(S)$ of $\ncp{\K}{Y}$ 
as follows: 
\begin{equation}
\Phi(S)[w]=\sum_{k\geq 1}a_k\underset{u_1\ldots u_k=w}{\sum_{u_1,\ldots,u_k\in Y^+}}u_1\stuffle_\varphi\cdots\stuffle_\varphi u_k,
\end{equation}
associated to any univariate formal power series $S=a_1X+a_2X^2+a_3X^3+\cdots$. The case of 
\begin{equation}
\log(1+X)=\sum_{k\geq 1}\frac{(-1)^{k-1}}{k}X^k
\end{equation} 
will be of particular importance. It reads here in the style of formula (\ref{antipode1}),
\begin{equation}\label{check-pi-1}
\check\pi_1(w)=\sum_{k\geq 1}\frac{(-1)^{k-1}}{k}
\underset{u_1\ldots u_k=w}{\sum_{u_1,\ldots,u_k\in Y^+}}u_1\stuffle_\varphi\cdots\stuffle_\varphi u_k.
\end{equation} 
This $\check\pi_1\in \mathrm{End}(\ncp{\K}{Y})$ has an adjoint $\pi_1\in \mathrm{End}(\ncs{\K}{Y})$ which reads
\begin{align}\label{pi-1-first}
\pi_1(S)&=\sum_{w\in Y^*}\scal{S}{\check\pi_1(w)}w\\
&=\sum_{k\geq 1}\frac{(-1)^{k-1}}{k}\sum_{u_1,\ldots,u_k\in Y^+}
\scal{S}{u_1\stuffle_\varphi\cdots\stuffle_\varphi u_k}\, u_1\ldots u_k.
\end{align}
It is an easy exercise to check that the family in the sums of (\ref{pi-1-first}) is summable\footnote{A family of (simple, double, etc.) series is summable if it is locally finite (see \cite{BDHMT} for a complete development).}. 
It is easy to check that the dominant term of all terms in a
$\stuffle_\varphi$ product is the corresponding $\shuffle$ product. 
This 
explains why we still have the theorem of Radford.
\begin{theorem}[\sc Radford's theorem]
When $\varphi$ is commutative, the associative and commutative algebra
with unit $(\ncp{\K}{Y},\ministuffle_{\varphi},1_{Y^*})$ is a
polynomial algebra. More precisely, the 
morphism $\beta : \K[\Lyn Y]\rightarrow
(\ncp{\K}{Y},\ministuffle_{\varphi},1_{Y^*})$ defined by $\beta(l)=l$ for $l\in \Lyn Y$ is an isomorphism. In other words, the family 
\begin{equation*}
\Big(l_1^{\ministuffle_{\varphi} i_1}\ministuffle_{\varphi}\cdots
\ministuffle_{\varphi} l_k^{\ministuffle_{\varphi} i_k}\Big)
_{\smallmatrix{k\geq 0,\ \{l_1,l_2,\dots ,l_k\}\subset \Lyn Y}\\
{(i_1,i_2,\dots ,i_k)\in (\N_+)^k}\endsmallmatrix}   
\end{equation*}   
is a linear basis of $\ncp{\K}{Y}$.
\end{theorem}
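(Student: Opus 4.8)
The plan is to deduce the statement from the classical Radford theorem for the undeformed shuffle algebra \cite{radford,reutenauer} by a filtration argument, exploiting exactly the observation recorded just above the theorem: the top-length part of a $\ministuffle_{\varphi}$-product is the ordinary shuffle product. In other words, I will realize $(\ncp{\K}{Y},\ministuffle_{\varphi})$ as a filtered deformation of the shuffle algebra and transport Radford's result through the associated graded.

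First I would make the key observation precise. Filter $\ncp{\K}{Y}$ by length, setting $F_n=\Span\{w\in Y^*:|w|\le n\}$, so that $F_0\subset F_1\subset\cdots$ is exhaustive and bounded below ($F_{-1}=0$). The claim, proved by a straightforward induction on $|u|+|v|$ using $(\mathrm{Rec})$, is that for all $u,v\in Y^*$
\begin{equation*}
u\ministuffle_{\varphi}v\equiv u\shuffle v\pmod{F_{|u|+|v|-1}}.
\end{equation*}
Indeed, in $(\mathrm{Rec})$ the first two summands reproduce the shuffle recursion and remain in top length, while the corrective term $\varphi(a,b)(u\ministuffle_{\varphi}v)$ lies in $F_{|u|+|v|-1}$ because $\varphi(a,b)\in\K Y$ is a combination of single letters, costing one unit of length. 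Consequently $F_m\ministuffle_{\varphi}F_n\subseteq F_{m+n}$, so $(\ncp{\K}{Y},\ministuffle_{\varphi})$ is a filtered algebra, associative and commutative by Theorem \ref{delta_phi} (here $\varphi$ is associative by the standing hypothesis of the subsection and commutative by hypothesis of the theorem), and the induced product on the associated graded $\operatorname{gr}\ncp{\K}{Y}=\bigoplus_n F_n/F_{n-1}$ is exactly the shuffle product. Hence $\operatorname{gr}(\ncp{\K}{Y},\ministuffle_{\varphi})\cong(\ncp{\K}{Y},\shuffle,1_{Y^*})$ as graded algebras.

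Next I would grade the polynomial algebra $\K[\Lyn Y]$ by assigning to each Lyndon word $l$ the degree $|l|$, and equip it with the associated filtration. The morphism $\beta$ sending each $l\in\Lyn Y$ to $l\in\ncp{\K}{Y}$ is then a morphism of filtered algebras, and by the congruence above its associated graded $\operatorname{gr}\beta:\K[\Lyn Y]\to\operatorname{gr}(\ncp{\K}{Y},\ministuffle_{\varphi})\cong(\ncp{\K}{Y},\shuffle,1_{Y^*})$ is precisely the map $l\mapsto l$ into the shuffle algebra. By the classical Radford theorem this $\operatorname{gr}\beta$ is an isomorphism (this is where $\K$ being a $\Q$-algebra is used, so that the relevant factorials are invertible). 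Finally I would invoke the elementary lifting lemma: a morphism of filtered vector spaces whose filtrations are exhaustive and bounded below and which induces an isomorphism on associated graded objects is itself an isomorphism, injectivity and surjectivity being checked degree by degree. Applying it to $\beta$ yields the isomorphism $\K[\Lyn Y]\xrightarrow{\sim}(\ncp{\K}{Y},\ministuffle_{\varphi},1_{Y^*})$, which is exactly the assertion that the displayed family of iterated $\ministuffle_{\varphi}$-products of Lyndon words is a linear basis.

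The step needing the most care is the top-length computation, i.e.\ checking rigorously that $(\mathrm{Rec})$ splits into a shuffle part in top length plus a strictly shorter remainder; everything else is formal once this holds. A related point is that the filtered/graded lemma must be applied over an arbitrary, possibly infinite, index set $I$, so the graded pieces $F_n/F_{n-1}$ are in general infinite-dimensional; the lemma nonetheless applies, since the filtration is bounded below (nothing lives in negative length) and exhaustive, so the downward induction in the surjectivity step terminates after finitely many stages for each element. One should also confirm that associativity and commutativity of $\ministuffle_{\varphi}$ are genuinely in force, as they are needed both to make sense of the unordered products $l_1^{\ministuffle_{\varphi} i_1}\ministuffle_{\varphi}\cdots\ministuffle_{\varphi}l_k^{\ministuffle_{\varphi} i_k}$ and to identify the graded product with $\shuffle$; this is exactly what the commutativity of $\varphi$ secures through Theorem \ref{delta_phi}.
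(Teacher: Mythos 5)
Your proposal is correct and takes essentially the same route as the paper: the paper's proof is precisely the triangularity observation that $l_1^{\ministuffle_{\varphi} i_1}\ministuffle_{\varphi}\cdots\ministuffle_{\varphi} l_k^{\ministuffle_{\varphi} i_k}$ equals the corresponding shuffle product plus terms of strictly smaller length, followed by the (implicit) reduction to classical Radford. Your version merely formalizes this as a length filtration with associated graded equal to the shuffle algebra, which is a clean and faithful expansion of the paper's one-line argument.
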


\begin{proof}
One checks that 
$$
l_1^{\ministuffle_{\varphi} i_1}\ministuffle_{\varphi}\cdots
\ministuffle_{\varphi} l_k^{\ministuffle_{\varphi} i_k}=l_1^{\minishuffle i_1}\minishuffle\ldots
\minishuffle l_k^{\minishuffle i_k}+\sum_{|v|<\sum_{1\leq j\leq k}i_j|l_j|} c_v\, v.
$$
The result follows.
\end{proof}
The theorem of Radford is important in the classical cases because it
is the left-hand side of Sch\"utzenberger's factorization in which we have the move
$$\mbox{PBW}\rightarrow\mbox{Radford};$$
see \cite{DDM} for a discussion of the converse.
\begin{lemma}[\sc $\varphi$-extended Friedrichs' criterion]\label{Friedrichs}
We denote\footnote{As in the classical case, $\Delta_{\ministuffle_{\varphi}}$ is a ${\tt conc}$-morphism as can be seen by transposition of the fact that $\Delta_{\tt conc}$ is a $\ministuffle_{\varphi}$-morphism.} by
\begin{equation*}
\Delta_{\ministuffle_{\varphi}} : \ncs{\K}{Y}\rightarrow \ncs{\K}{Y^*\otimes Y^*}
\end{equation*}
the dual of $\ministuffle_{\varphi}$ applied to series, i.e., defined by 
\begin{equation*}
\Delta_{\ministuffle_{\varphi}}(S)=\sum_{u,v\in Y^*}\scal{S}{u\ministuffle_{\varphi}\, v}\, u\otimes v.
\end{equation*}
Let now $S\in\ncs{\K}{Y}$. Then we have:
\begin{enumerate}
\item If $\scal{S}{1_{Y^*}}=0$ then $S$ is primitive
(i.e.,
  $\Delta_{\ministuffle_{\varphi}}(S)=S\otimes1_{Y^*}+1_{Y^*}\otimes
  S$)\footnote{Tensor products of linear forms.} if and only if we
  have
$\scal{S}{u\ministuffle_{\varphi} v}=0$ for any $u$ and $v\in Y^+$.
\item If $\scal{S}{1_{Y^*}}=1$, then $S$ is group-like (i.e., $\Delta_{\ministuffle_{\varphi}}(S)=S\otimes S$)\footnote{idem}
if and only if we have $\scal{S}{u\ministuffle_{\varphi}
  v}=\scal{S}{u}\scal{S}{v}$ for any $u$ and $v\in Y^*$.
\end{enumerate}
\end{lemma}
\begin{proof}
The expected equivalences are due to the following facts:
\begin{align*}
\Delta_{\ministuffle_{\varphi}}(S)&=S\otimes1_{Y^*}+1_{Y^*}\otimes 
S-\scal S{1_{Y^*}}1_{Y^*}\otimes1_{Y^*}+\sum_{u,v\in Y^+}
\scal S{u\ministuffle_{\varphi} v}u\otimes v,\\
\Delta_{\ministuffle_{\varphi}}(S)&=\sum_{u,v\in Y^*}\scal S{u\ministuffle_{\varphi} v}u\otimes v
\quad\mbox{and}\quad S\otimes S=\sum_{u,v\in Y^*}\scal{S}{u}\scal{S}{v}u\otimes v.
\qedhere
\end{align*}
\end{proof}
\begin{lemma}\label{lemme2} 
Let $S\in\ncs{\K}{Y}$ be such that $\scal{S}{1_{Y^*}}=1$. Then $S$ is group-like if and only if\,\footnote{For any $h\in\serie{\K}{Y}$, if $\scal{h}{1_{Y^*}}=0$, we define
\begin{eqnarray*}
\log(1_{Y^*}+h)=\sum_{n\ge1}\frac{(-1)^{n-1}}{n}h^n&\mbox{and}&\exp(h)=\sum_{n\ge0}\frac{h^n}{n!},
\end{eqnarray*}
and we have the usual formulas $\log(\exp(h))=h$ and $\exp(\log(1_{Y^*}+h))=1_{Y^*}+h$.} 
$\log(S)$ is primitive.
\end{lemma}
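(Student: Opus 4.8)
The plan is to reduce the statement to the elementary principle that a continuous unital algebra morphism intertwines $\exp$ and $\log$, combined with the fact, already recorded in the footnote to Lemma~\ref{Friedrichs}, that $\Delta_{\ministuffle_{\varphi}}$ is a morphism for the concatenation product. First I would set $h=\log S$: since $\scal{S}{1_{Y^*}}=1$ we may write $S=1_{Y^*}+(S-1_{Y^*})$ with $\scal{S-1_{Y^*}}{1_{Y^*}}=0$, so by the footnote definitions $h$ is a well-defined series with $\scal{h}{1_{Y^*}}=0$ and $S=\exp(h)$, the operators $\exp$ and $\log$ being mutually inverse on the relevant sets. All of this is controlled by the $\N$-grading by the weight, which makes the families defining $\exp$ and $\log$ summable.

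Next I would exploit the key structural input: the map $\Delta_{\ministuffle_{\varphi}}:\ncs{\K}{Y}\rightarrow\ncs{\K}{Y^*\otimes Y^*}$ is a continuous, unital morphism for concatenation, obtained by transposing the fact that $\Delta_{\tt conc}$ is a $\ministuffle_{\varphi}$-morphism. Since it sends series of zero constant term to series of zero constant term (indeed $\scal{\Delta_{\ministuffle_{\varphi}}h}{1_{Y^*}\otimes1_{Y^*}}=\scal{h}{1_{Y^*}\ministuffle_{\varphi}1_{Y^*}}=\scal{h}{1_{Y^*}}$), it commutes with $\exp$ and $\log$; concretely $\Delta_{\ministuffle_{\varphi}}(\exp h)=\exp(\Delta_{\ministuffle_{\varphi}}h)$ and $\Delta_{\ministuffle_{\varphi}}(\log S)=\log(\Delta_{\ministuffle_{\varphi}}S)$. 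I would also record the elementary commutation in the concatenation algebra of $\ncs{\K}{Y^*\otimes Y^*}$: for any $a,b$ one has $(a\otimes1_{Y^*})(1_{Y^*}\otimes b)=a\otimes b=(1_{Y^*}\otimes b)(a\otimes1_{Y^*})$, so $h\otimes1_{Y^*}$ and $1_{Y^*}\otimes h$ commute and exponentials/logarithms of their sums and products factor.

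With these two facts the two implications are immediate. For ($\Leftarrow$), assuming $h$ primitive, i.e. $\Delta_{\ministuffle_{\varphi}}h=h\otimes1_{Y^*}+1_{Y^*}\otimes h$, I compute
\begin{eqnarray*}
\Delta_{\ministuffle_{\varphi}}S=\exp(\Delta_{\ministuffle_{\varphi}}h)=\exp(h\otimes1_{Y^*})\exp(1_{Y^*}\otimes h)=(S\otimes1_{Y^*})(1_{Y^*}\otimes S)=S\otimes S,
\end{eqnarray*}
so $S$ is group-like. Conversely, assuming $S$ group-like and writing $S\otimes S=(S\otimes1_{Y^*})(1_{Y^*}\otimes S)$ as a product of commuting series, I take logarithms:
\begin{eqnarray*}
\Delta_{\ministuffle_{\varphi}}h=\log(\Delta_{\ministuffle_{\varphi}}S)=\log(S\otimes S)=\log(S\otimes1_{Y^*})+\log(1_{Y^*}\otimes S)=h\otimes1_{Y^*}+1_{Y^*}\otimes h,
\end{eqnarray*}
so $h$ is primitive.

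The hard part will be the bookkeeping behind the phrase \emph{``a continuous morphism commutes with $\exp$ and $\log$''}: one must verify that in the completed space $\ncs{\K}{Y^*\otimes Y^*}$ all the families appearing (the defining series of $\exp$ and $\log$, and their images under $\Delta_{\ministuffle_{\varphi}}$) are summable and that the rearrangements above are legitimate. This is precisely where the $\N$-grading by weight is indispensable: it renders every homogeneous component a finite sum and reduces each displayed identity to an identity of polynomials of fixed weight, after which the classical Hopf-theoretic factorization applies verbatim. The purely algebraic factorizations present no difficulty once summability is secured; alternatively one could bypass $\exp$ and $\log$ entirely and deduce both implications from the Friedrichs-type characterizations of Lemma~\ref{Friedrichs}, but the $\exp$--$\log$ route is shorter and more transparent.
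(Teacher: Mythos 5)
Your proof is correct and takes essentially the same route as the paper's: both rest on $\Delta_{\ministuffle_{\varphi}}$ being a continuous concatenation morphism, the commutation of $h\otimes 1_{Y^*}$ with $1_{Y^*}\otimes h$, and the $\exp$--$\log$ correspondence, the paper writing out only the ``primitive $\Rightarrow$ group-like'' direction and noting that the converse is obtained in the same way --- which is exactly your logarithm computation. The only cosmetic point is that summability is governed by the filtration by word length (valid for the arbitrary index set $I$ of this section) rather than by a weight grading.
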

\begin{proof}
Since $\Delta_{\ministuffle_{\varphi}}$ and the maps  $T\mapsto T\otimes1_{Y^*},T\mapsto1_{Y^*}\otimes T$
are continuous homomorphisms, then, if $\log S$ is primitive, we have (see Lemma~\ref{Friedrichs}(1))
\begin{equation*}
\Delta_{\ministuffle_{\varphi}}(\log S)=\log S\otimes1_{Y^*}+1_{Y^*}\otimes\log S,
\end{equation*}
and, since $\log S\otimes1_{Y^*}$ and $1_{Y^*}\otimes\log S$ commute,
we get successively
\begin{align*}
\Delta_{\ministuffle_{\varphi}}(S)&=\Delta_{\ministuffle_{\varphi}}(\exp(\log S))\\
&=\exp(\Delta_{\ministuffle_{\varphi}}(\log S))\\
&=\exp(\log S\otimes1_{Y^*})\exp(1_{Y^*}\otimes\log S)\\
&=(\exp(\log S)\otimes1_{Y^*})(1_{Y^*}\otimes\exp(\log S))\\
&=S\otimes S.
\end{align*}
This means, together with $\scal{S}{1_{Y^*}}$, that $S$ is group-like. The converse can be obtained in the same way.
\end{proof}
\begin{remark}
i) In fact, Lemma~\ref{lemme2} establishes a nice log-exp
correspondence for the Lie group of group-like series.

\medskip
ii) Through the canonical pairing $\scal{-}{-}:\ncs{\K}{Y}\otimes
\ncp{\K}{Y}\to \K$, we have $\ncs{\K}{Y}\simeq (\ncp{\K}{Y})^*$. Group-like (respectively primitive) series are in bijection with characters (respectively infinitesimal characters) of the algebra $(\ncp{\K}{Y},\ministuffle_{\varphi},1_{Y^*})$.
\end{remark} 
\begin{lemma}\label{lemme3} 
\begin{enumerate}
\item Group-like series form a group (for concatenation).  
\item The space $\mathrm{Prim}(\ncs{\K}{Y})$ is a Lie algebra (for the bracket derived from concatenation). 
\end{enumerate}
\end{lemma}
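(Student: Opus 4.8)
The plan is to prove both assertions by leveraging the Friedrichs-type characterizations of group-like and primitive series already established in Lemma \ref{Friedrichs} and Lemma \ref{lemme2}, so that the group and Lie-algebra structures become direct transcriptions of the multiplicative and additive behaviour of the pairing $\scal{S}{u\ministuffle_{\varphi} v}$. For part (1), I would first observe that $1_{Y^*}$ is trivially group-like and serves as the identity for concatenation. The main content is closure under product and inverse.

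For closure under the concatenation product, let $S,T$ be group-like, so $\scal{S}{u\ministuffle_{\varphi}v}=\scal{S}{u}\scal{S}{v}$ and likewise for $T$, by Lemma \ref{Friedrichs}(\ref{2}). Using that $\Delta_{\ministuffle_{\varphi}}$ is a concatenation morphism (the remark in the footnote of Lemma \ref{Friedrichs}, obtained by transposing the fact that $\Delta_{\tt conc}$ is a $\ministuffle_{\varphi}$-morphism), one computes
\begin{eqnarray*}
\Delta_{\ministuffle_{\varphi}}(ST)=\Delta_{\ministuffle_{\varphi}}(S)\,\Delta_{\ministuffle_{\varphi}}(T)=(S\otimes S)(T\otimes T)=(ST)\otimes(ST),
\end{eqnarray*}
so $ST$ is again group-like, and clearly $\scal{ST}{1_{Y^*}}=1$. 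For the inverse, I would pass through the log-exp correspondence of Lemma \ref{lemme2}: write $S=\exp(\ell)$ with $\ell=\log S$ primitive, and set $S^{-1}=\exp(-\ell)$. Since $\mathrm{Prim}$ is stable under negation (it is a linear subspace, as handled in part (2)), $-\ell$ is primitive, hence $\exp(-\ell)$ is group-like by Lemma \ref{lemme2}, and $SS^{-1}=\exp(\ell)\exp(-\ell)=1_{Y^*}$ because $\ell$ and $-\ell$ commute for concatenation. Thus the group-like series form a group.

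For part (2), I would use Lemma \ref{Friedrichs}(\ref{1}): a series $S$ with $\scal{S}{1_{Y^*}}=0$ is primitive if and only if $\scal{S}{u\ministuffle_{\varphi}v}=0$ for all $u,v\in Y^+$. Both conditions are manifestly linear in $S$, so $\mathrm{Prim}(\ncs{\K}{Y})$ is a linear subspace. It remains to check stability under the concatenation bracket $[S,T]=ST-TS$. For primitive $S,T$ one has $\Delta_{\ministuffle_{\varphi}}S=S\otimes1_{Y^*}+1_{Y^*}\otimes S$ and similarly for $T$; again using that $\Delta_{\ministuffle_{\varphi}}$ is a concatenation morphism, a short computation in $\ncs{\K}{Y^*\otimes Y^*}$ shows that the cross terms cancel and
\begin{eqnarray*}
\Delta_{\ministuffle_{\varphi}}([S,T])=[S,T]\otimes1_{Y^*}+1_{Y^*}\otimes[S,T],
\end{eqnarray*}
so $[S,T]$ is primitive. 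The Jacobi identity and antisymmetry are automatic for a commutator bracket on an associative algebra.

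The only delicate point is justifying that $\Delta_{\ministuffle_{\varphi}}$ is genuinely a morphism for concatenation at the level of series, and that the various exponential and logarithm manipulations are legitimate; but this is exactly the content already invoked in the proof of Lemma \ref{lemme2} (continuity of $\Delta_{\ministuffle_{\varphi}}$ and of the maps $T\mapsto T\otimes1_{Y^*}$, $T\mapsto1_{Y^*}\otimes T$, together with the summability of the relevant families). I expect the main obstacle to be purely bookkeeping: ensuring that all products and coproducts of series are well defined in $\ncs{\K}{Y}$ and $\ncs{\K}{Y^*\otimes Y^*}$, which is handled by the graded (conilpotent) structure and the summability footnote accompanying \mref{pi-1-first}. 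Given the two preceding lemmas, the statement is essentially a formal consequence.
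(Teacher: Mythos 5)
Your proof is correct and follows essentially the route the paper intends: the paper's own proof is simply ``As in the classical case,'' and your argument is precisely that classical one, using that $\Delta_{\ministuffle_{\varphi}}$ is a concatenation morphism together with the Friedrichs criterion (Lemma \ref{Friedrichs}) and the log-exp correspondence (Lemma \ref{lemme2}) for the inverse. Nothing further is needed.
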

\begin{proof}
As in the classical case. 
\end{proof}
We extend the transposition process in the same way as in
Lemma~\ref{Friedrichs} and note, for $n\geq 1$, that
\begin{equation}
\Delta_{\ministuffle_{\varphi}}^{(n-1)} : \ncs{\K}{Y}\rightarrow \ncs{\K}{(Y^*)^{\otimes\,n}},
\end{equation}
the dual of $\ministuffle_{\varphi}^{(n-1)}$ applied to series, i.e., defined by 
\begin{equation}
\Delta_{\ministuffle_{\varphi}}^{(n-1)}(S)=\sum_{u_1,u_2,\dots u_n\in Y^*}\scal{S}{u_1\ministuffle_{\varphi}\cdots  \ministuffle_{\varphi}u_n}\, u_1\otimes\cdots \otimes u_n \ .
\end{equation}
We will use the following lemma several times, which gives the combinatorics of products of primitive series (and the polynomials).
\begin{lemma}[\sc Higher order co-multiplications of products]\label{language_lemma}
Let us consider the  language $\mathcal{M}$ over the alphabet $\calA=\{a_1,a_2,\ldots,,a_m\}$,
\begin{equation*}
\mathcal{M}=\{w\in\calA^*\mid w=a_{j_1}\dots a_{j_{|w|}},j_1<\dots<j_{|w|},1\leq |w|\leq m\},
\end{equation*}
and the 
morphism
\begin{eqnarray*}
\mu:\K\langle\calA\rangle&\longrightarrow&\ncs{\K}{Y},\\
a_i&\longmapsto&S_i,
\end{eqnarray*}
where $S_1,\ldots,S_m$ are primitive series in $\ncs{\K}{Y}$. Then
\begin{equation*}
\Delta_{\ministuffle_{\varphi}}^{(n-1)}(S_1\ldots S_m)=
\underset{a_1\cdots a_m\in\supp(w_1\shuffle\ldots\shuffle w_n)}
{\underset{|w_1|+\dots+|w_n|=m}{\sum_{w_1,\ldots,w_n\in\mathcal{M}}}}
\mu(w_1)\otimes\cdots\otimes\mu(w_n).
\end{equation*}
\end{lemma}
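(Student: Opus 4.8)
The plan is to reduce everything to two facts already available: that the iterated co-product is an algebra morphism for concatenation, and that each $S_i$ is primitive. Recall from the footnote to Lemma \ref{Friedrichs} that $\Delta_{\ministuffle_{\varphi}}$ is a ${\tt conc}$-morphism (the transpose of the fact that $\Delta_{\tt conc}$ is a $\ministuffle_{\varphi}$-morphism); by coassociativity the same holds for every iterate $\Delta_{\ministuffle_{\varphi}}^{(n-1)}$, so that
\begin{eqnarray*}
\Delta_{\ministuffle_{\varphi}}^{(n-1)}(S_1\ldots S_m)=\prod_{i=1}^{m}\Delta_{\ministuffle_{\varphi}}^{(n-1)}(S_i),
\end{eqnarray*}
the product being taken in $\ncs{\K}{(Y^*)^{\otimes n}}$ for the componentwise concatenation, with the factors ordered by increasing $i$. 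The only analytic point here is that this morphism property, clear on polynomials, passes to the series $S_i$ and to their product; this is guaranteed by the same summability framework used for (\ref{pi-1-first}), so I would dispose of it in one sentence.

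Next I would compute the factors. Since $S_i$ is primitive, $\Delta_{\ministuffle_{\varphi}}(S_i)=S_i\otimes 1_{Y^*}+1_{Y^*}\otimes S_i$, and a straightforward induction on $n$, feeding this into $\Delta_{\ministuffle_{\varphi}}^{(n-1)}=(\Delta_{\ministuffle_{\varphi}}\otimes\mathrm{id}^{\otimes(n-2)})\circ\Delta_{\ministuffle_{\varphi}}^{(n-2)}$, yields
\begin{eqnarray*}
\Delta_{\ministuffle_{\varphi}}^{(n-1)}(S_i)=\sum_{j=1}^{n}1_{Y^*}^{\otimes(j-1)}\otimes S_i\otimes 1_{Y^*}^{\otimes(n-j)},
\end{eqnarray*}
i.e. $S_i$ sits, with the unit in the other slots, in each of the $n$ tensor positions exactly once.

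Expanding the product then amounts to choosing, for each $i\in\{1,\ldots,m\}$, one slot $c(i)\in\{1,\ldots,n\}$ into which $S_i$ is placed, that is an arbitrary map $c:\{1,\ldots,m\}\to\{1,\ldots,n\}$. Because multiplication of elementary tensors is componentwise and the outer product runs through $i=1,\ldots,m$ in increasing order, the $j$-th slot collects $\prod_{i\in c^{-1}(j)}S_i$ with the factors in increasing index order, which is exactly $\mu(w_j)$ for $w_j$ the increasing word spelled by $c^{-1}(j)$. Hence
\begin{eqnarray*}
\Delta_{\ministuffle_{\varphi}}^{(n-1)}(S_1\ldots S_m)=\sum_{c:\{1,\ldots,m\}\to\{1,\ldots,n\}}\mu(w_1)\otimes\cdots\otimes\mu(w_n),
\end{eqnarray*}
where each $w_j\in\mathcal{M}\cup\{1_{\calA^*}\}$; empty blocks genuinely occur (e.g. as soon as $n>m$) and contribute $\mu(1_{\calA^*})=1_{Y^*}$ since $\mu$ is unital.

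The remaining step, which I regard as the only genuinely combinatorial one, is to recognise this indexing set as the one in the statement. Given increasing words $(w_1,\ldots,w_n)$ with $|w_1|+\ldots+|w_n|=m$, the condition $a_1\cdots a_m\in\supp(w_1\shuffle\ldots\shuffle w_n)$ says precisely that $a_1\cdots a_m$ is an interleaving of $w_1,\ldots,w_n$; since the letters $a_1,\ldots,a_m$ are pairwise distinct, each $a_i$ comes from exactly one $w_j$, which recovers the map $c$, and since every $w_j$ is increasing it is forced to be the increasing word on $c^{-1}(j)$. Conversely each $c$ produces such an interleaving, so the maps $c$ are in bijection with the admissible tuples $(w_1,\ldots,w_n)$, and the previous display becomes the asserted formula. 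The main obstacle is thus not any hard estimate but this careful translation of the shuffle-support condition into an ordered set partition of $\{a_1,\ldots,a_m\}$, together with the bookkeeping of empty blocks.
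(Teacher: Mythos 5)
Your proof is correct and takes essentially the same route as the paper's: the paper's sketch states precisely your intermediate formula, namely $\Delta_{\ministuffle_{\varphi}}^{(n-1)}(S_1\ldots S_m)=\sum_{I_1+\cdots+I_n=[1\cdots m]}\mathcal{S}[I_1]\otimes\cdots\otimes\mathcal{S}[I_n]$ (your maps $c$ being exactly the ordered set partitions $I_j=c^{-1}(j)$), and then identifies each block with the increasing word it spells. Your remark about empty blocks when $n>m$ is consistent with the paper's own proof, which allows empty $I_j$; it merely exposes a slight looseness in the statement's definition of $\mathcal{M}$, which as written excludes the empty word.
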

\begin{proof}[Proof (sketch)]
Let $\mathcal{S}=(S_1,\dots,S_m)$ be this family of primitive series
and, for $I=\{i_1,\break\dots, i_k\}\subset [1\cdots m]$ in increasing
order, let us write $\mathcal{S}[I]$ for 
the product $S_{i_1}\cdots S_{i_k}$. Then we have
\begin{equation*}
\Delta_{\ministuffle_{\varphi}}^{(n-1)}(S_1\ldots S_m)=
\sum_{I_1+\cdots +I_n=[1\cdots m]}\mathcal{S}[I_1]\otimes\cdots \otimes \mathcal{S}[I_n].
\end{equation*}
Setting $w_i=(a_1a_2\ldots a_m)[I]$, one gets the expected result.
\end{proof}
\begin{lemma}[\sc Pairing of products]
Let $S_1,\ldots,S_m$ be primitive series in $\ncs{\K}{Y}$, and let $P_1,\ldots,P_n$
be {\em proper}\footnote{i.e., polynomials without constant term; see
  \cite{berstel}.} polynomials in $\KY$. Then 
one has in general
\begin{equation*}
\scal{P_1\ministuffle_{\varphi}\cdots\ministuffle_{\varphi}P_n}{S_1\ldots S_m}
=
\underset{a_1\cdots a_m\in\supp(w_1\shuffle\ldots\shuffle w_n)}
{\underset{|w_1|+\dots+|w_n|=m}{\sum_{w_1,\ldots,w_n\in\mathcal{M}
}}}\prod_{i=1}^n\scal{P_i}{\mu(w_i)}.
\end{equation*}
In particular, we have the following exhaustive list of 
circumstances:
\begin{enumerate}
\item  If $n>m$, then  $\scal{P_1\ministuffle_{\varphi}\cdots\ministuffle_{\varphi}P_n}{S_1\ldots S_m}=0.$
\item  If $n=m$, then
\begin{equation*}
\scal{P_1\ministuffle_{\varphi}\cdots\ministuffle_{\varphi}P_n}{S_1\ldots S_n}
=\sum_{\sigma\in\mathfrak{S}_n}\prod_{i=1}^n\scal{P_i}{S_{\sigma(i)}}.
\end{equation*}
\item
If $n<m$, 
then one has the general form in which every product in the sum
contains at least a factor $\scal{P_i}{\mu(w_i)}$ with $|w_i|\geq 2$.
\end{enumerate}
\end{lemma}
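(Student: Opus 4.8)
The plan is to deduce the master formula from Lemma~\ref{language_lemma} by transposition, and then to read off the three cases by elementary counting. First I would record the adjunction built into the definition of $\Delta_{\ministuffle_{\varphi}}^{(n-1)}$: evaluating that definition on basis words gives $\scal{u_1\otimes\cdots\otimes u_n}{\Delta_{\ministuffle_{\varphi}}^{(n-1)}(S)}=\scal{S}{u_1\ministuffle_{\varphi}\cdots\ministuffle_{\varphi}u_n}$ for all $u_1,\ldots,u_n\in Y^*$ and all $S\in\ncs{\K}{Y}$. Since each $P_i$ has finite support, I may extend this by multilinearity in the first argument to obtain
\[
\scal{P_1\ministuffle_{\varphi}\cdots\ministuffle_{\varphi}P_n}{S_1\cdots S_m}=\scal{P_1\otimes\cdots\otimes P_n}{\Delta_{\ministuffle_{\varphi}}^{(n-1)}(S_1\cdots S_m)}.
\]
No summability issue arises, since $P_1\ministuffle_{\varphi}\cdots\ministuffle_{\varphi}P_n$ is again a polynomial paired against a single series.

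Next I would substitute the explicit expression for $\Delta_{\ministuffle_{\varphi}}^{(n-1)}(S_1\cdots S_m)$ supplied by Lemma~\ref{language_lemma}, whose index set is finite, and use that the pairing factorizes across the tensor product, $\scal{P_1\otimes\cdots\otimes P_n}{\mu(w_1)\otimes\cdots\otimes\mu(w_n)}=\prod_{i=1}^n\scal{P_i}{\mu(w_i)}$. This delivers the displayed general formula verbatim.

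The three cases are then bookkeeping on the constraint $|w_1|+\cdots+|w_n|=m$, the one substantive ingredient being that the $P_i$ are \emph{proper}. Because $\scal{P_i}{1_{Y^*}}=0$, every term in which some block $w_i$ is empty carries the factor $\scal{P_i}{\mu(w_i)}=\scal{P_i}{1_{Y^*}}=0$ and hence drops out, so I may restrict attention to tuples with all blocks nonempty. If $n>m$, there is no room for $n$ nonempty increasing blocks inside the $m$ letters $a_1\cdots a_m$, so the surviving sum is empty and the pairing vanishes. If $n=m$, then $n$ nonempty blocks of total length $m$ force every $|w_i|=1$; the blocks then exhaust $\{a_1,\ldots,a_m\}$, i.e. $w_i=a_{\sigma(i)}$ for a unique $\sigma\in\mathfrak{S}_n$ with $\mu(w_i)=S_{\sigma(i)}$ (the support condition being automatic for singletons), and one recovers $\sum_{\sigma\in\mathfrak{S}_n}\prod_{i=1}^n\scal{P_i}{S_{\sigma(i)}}$. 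If $n<m$, then $n$ nonempty blocks of total length $m>n$ force, by pigeonhole, at least one $|w_i|\geq2$, which is exactly the asserted shape.

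The argument is essentially routine once Lemma~\ref{language_lemma} is in hand; the only point demanding care is the empty-word bookkeeping, namely the observation that properness of the $P_i$ is precisely what annihilates the terms with empty blocks and thereby lets the crude length count decide each of the three regimes.
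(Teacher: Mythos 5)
Your proof is correct and takes essentially the same route as the paper's, which likewise reduces everything to Lemma~\ref{language_lemma} via the single adjunction equality $\scal{P_1\ministuffle_{\varphi}\cdots\ministuffle_{\varphi}P_n}{S_1\cdots S_m}=\scal{P_1\otimes\cdots\otimes P_n}{\Delta_{\ministuffle_{\varphi}}^{(n-1)}(S_1\cdots S_m)}$. The only difference is that the paper stops at that equality, whereas you also make explicit the empty-block bookkeeping (properness of the $P_i$ annihilating blocks of length $0$) and the length count that settles the three regimes, details the paper leaves implicit.
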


\begin{proof}
It is a consequence of Lemma~\ref{language_lemma} 
through the equality
\begin{equation*}
\scal{P_1\ministuffle_{\varphi}\cdots\ministuffle_{\varphi}P_n}{S_1\ldots S_n}=
\scal{P_1\otimes\cdots\otimes P_n}{\Delta_{\ministuffle_{\varphi}}^{(n-1)}(S_1\ldots S_n)}\ .
\qedhere
\end{equation*}
\end{proof}
In the sequel, we assume that $\varphi$ is associative and dualizable.

\bigskip
Now, we have the following two structures:
\begin{align}
\calH_{{\ministuffle_{\varphi}}}&=(\KY,{\tt conc},1_{Y^*},\Delta_{\ministuffle_{\varphi}},\epsilon),\\
\calH_{{\ministuffle_{\varphi}}}^{\vee}&=(\KY,{\ministuffle_{\varphi}},1_{Y^*},\Delta_{\tt conc},\epsilon),
\end{align}    
which are mutually dual\footnote{This duality is separating; see \cite{TVS}.} bialgebras. 
The bialgebra $\calH_{{\ministuffle_{\varphi}}}$ need not be a Hopf algebra, {\it even if $\Delta_{\ministuffle_{\varphi}}$ is cocommutative} (see Example~\ref{ex_struct}.2). 

Now, let us consider
\begin{align}
\calI&:=\mathrm{span}_{\K}\{u\ministuffle_{\varphi}v\}_{u,v\in Y^+},\\
\K_+\pol{Y}&:=\{P\in\KY\bv\scal{P}{1_{Y^*}}=0\},\\
\calP&:=\mathrm{Prim}(\calH_{{\ministuffle_{\varphi}}})=\{P\in\KY\bv\Delta^+_{\ministuffle_{\varphi}}(P)=0\},
\end{align}
where
\begin{equation}
\Delta^+_{\ministuffle_{\varphi}}(P)=\Delta_{\ministuffle_{\varphi}}(P)-(P\otimes1_{Y^*}+1_{Y^*}\otimes P)+
\scal{P}{1_{Y^*}}1_{Y^*}\otimes 1_{Y^*}\ .
\end{equation}
\begin{remark}
At this stage ($\varphi$ not necessarily moderate), it 
can happen that 
$\mathrm{Prim}(\calH_{{\ministuffle_{\varphi}}})=\{0\}$.
This is for example the case 
with the $q$-infiltration bialgebra on one letter at $q=1$,
\begin{equation*}
\calH_{{\ministuffle_{\varphi}}}=(\K[x],{\tt conc},1_{x^*},\Delta_{\uparrow_1},\epsilon),
\end{equation*} 
and, more generally, when 
$q$ is not nilpotent\footnote{Recall that $q$ is an element of the ring $\K$ (see example \ref{ex_struct}.\ref{qbelong}).}.
\end{remark}
We can also endow $\mathrm{End}(\KY)$, the $\K$-module of endomorphisms of $\KY$, with the {\it convolution} product defined by
\begin{eqnarray}
\text{for all }f,g\in\mathrm{End}(\KY),&&f\star g=\conc\circ(f\otimes g)\circ\Delta_{\ministuffle_{\varphi}},\\
\mbox{i.e.},
\text{for all }P\in\KY,&&(f\star g)(P)=\sum_{u,v\in Y^*}\scal{P}{u\ministuffle_{\varphi}v}f(u)g(v).
\end{eqnarray}
Then $\mathrm{End}(\KY)$ becomes a $\K$-associative algebra with unity (AAU), its unit being $e={1}_{\KY}\circ\epsilon$.

\smallskip
It is convenient to represent every $f\in \mathrm{End}(\KY)$ by its graph, a double series which reads 
\begin{equation}
\Gamma(f)=\sum_{w\in Y^*}w\otimes f(w).
\end{equation}
This representation is faithful and, by direct computation, one gets 
\begin{equation}
\Gamma(f)\Gamma(g)=\Gamma(f\star g),
\end{equation}
where the multiplication of double series is performed by the stuffle on the left and the concatenation on the right. 
\begin{definition}
Here $t$ is a real parameter. 
Let us define define
\begin{eqnarray*}
\calD_Y:=\Gamma(\mathrm{Id}_{\ncp{\K}{Y}})=\sum_{w\in Y^*}w\otimes w,&\Haus_Y:=\log\calD_Y,&\sigma_Y(t):=\exp(t\Haus_Y).
\end{eqnarray*}
\end{definition}

\bigskip
From now on, we assume that $\varphi$ is associative, commutative and dualizable.
\begin{lemma}[\sc $\pi_1$ is a projector on the primitive series]
\label{primitive_proj}
The endomorphism $\pi_1$ is a projector, the image of which is exactly 
the space of primitive series, $\mathrm{Prim}(\ncs{\K}{Y})$.  
\end{lemma}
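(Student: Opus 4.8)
The plan is to prove two facts and read off the lemma from them: \emph{(a)} $\pi_1$ fixes every primitive series, and \emph{(b)} $\pi_1(S)$ is primitive for every $S\in\ncs{\K}{Y}$. Granting these, \emph{(b)} gives $\mathrm{Im}(\pi_1)\subseteq\mathrm{Prim}(\ncs{\K}{Y})$, \emph{(a)} gives the reverse inclusion (each primitive $T$ equals $\pi_1(T)$), and together they give idempotence, since for arbitrary $S$ the series $\pi_1(S)$ is primitive by \emph{(b)} and hence fixed by $\pi_1$ by \emph{(a)}, so $\pi_1\circ\pi_1(S)=\pi_1(S)$. The bridge between the two facts is the transposition $\scal{\pi_1(S)}{P}=\scal{S}{\check\pi_1(P)}$ for $P\in\KY$, which is exactly formula \pref{pi-1-first}.

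Fact \emph{(a)} rests on the Friedrichs criterion. If $T$ is primitive then $\scal{T}{1_{Y^*}}=0$ and, by Lemma~\ref{Friedrichs}(1), $\scal{T}{u\ministuffle_\varphi v}=0$ for all $u,v\in Y^+$; extending the right argument by linearity, $\scal{T}{u\ministuffle_\varphi P}=0$ for every $u\in Y^+$ and every proper $P$. Because a $\ministuffle_\varphi$-product of words of $Y^+$ is again a $\K$-combination of words of $Y^+$, each summand with $k\ge2$ in \pref{pi-1-first} carries a factor $\scal{T}{u_1\ministuffle_\varphi(u_2\ministuffle_\varphi\cdots\ministuffle_\varphi u_k)}=0$ and so drops out; only $k=1$ remains, whence $\pi_1(T)=\sum_{u\in Y^+}\scal{T}{u}\,u=T$.

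Fact \emph{(b)} is the crux. Since $\scal{\pi_1(S)}{1_{Y^*}}=\scal{S}{\check\pi_1(1_{Y^*})}=0$, Lemma~\ref{Friedrichs}(1) shows that $\pi_1(S)$ is primitive for all $S$ if and only if $\scal{\pi_1(S)}{u\ministuffle_\varphi v}=0$ for all $u,v\in Y^+$; transposing, this is the single polynomial identity
$$\check\pi_1(u\ministuffle_\varphi v)=0\qquad(u,v\in Y^+),$$
i.e. $\check\pi_1$ kills the ideal $\calI$. I would prove it by reading \pref{check-pi-1} as $\check\pi_1=\log^{\ast}(\mathrm{id})$, the logarithm of the identity in the convolution algebra of the \emph{commutative} Hopf algebra $\calH_{\ministuffle_\varphi}^{\vee}=(\KY,\ministuffle_\varphi,1_{Y^*},\Delta_{\conc},\epsilon)$, where $f\ast g=\ministuffle_\varphi\circ(f\otimes g)\circ\Delta_{\conc}$; indeed $(\mathrm{id}-e)^{\ast k}(w)=\sum_{w_1\cdots w_k=w,\,w_i\in Y^+}w_1\ministuffle_\varphi\cdots\ministuffle_\varphi w_k$ reproduces the summands of \pref{check-pi-1}, and on each weight-homogeneous component only finitely many $k$ contribute.

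The decisive step is that the convolution powers $\Psi_N=\mathrm{id}^{\ast N}=m^{(N-1)}\circ\Delta_{\conc}^{(N-1)}$ are \emph{algebra endomorphisms} of $(\KY,\ministuffle_\varphi)$: the iterated coproduct $\Delta_{\conc}^{(N-1)}$ is multiplicative because $\calH_{\ministuffle_\varphi}^{\vee}$ is a bialgebra, and the iterated product $m^{(N-1)}$ is multiplicative precisely because $\ministuffle_\varphi$ is \emph{commutative} --- this is where the hypothesis on $\varphi$ enters, through Theorem~\ref{delta_phi}(1). On each homogeneous component $\Psi_N$ depends polynomially on $N$ and interpolates to $\Psi_t=\exp^{\ast}(t\,\check\pi_1)$ with $\frac{d}{dt}\big|_{t=0}\Psi_t=\check\pi_1$ and $\Psi_0=e$; since $\K$ is a $\Q$-algebra, the morphism identity $\Psi_N(x\ministuffle_\varphi y)=\Psi_N(x)\ministuffle_\varphi\Psi_N(y)$ valid for all integers $N\ge1$ persists as a polynomial identity in $t$, and differentiating at $t=0$ yields $\check\pi_1(x\ministuffle_\varphi y)=\epsilon(x)\,\check\pi_1(y)+\epsilon(y)\,\check\pi_1(x)$, which vanishes for $x=u,\ y=v\in Y^+$. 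I expect this last paragraph --- establishing that $\check\pi_1$ is a $\ministuffle_\varphi$-derivation killing the square of the augmentation ideal --- to be the main obstacle; one may shortcut it by citing the classical fact that the first Eulerian idempotent of a connected graded commutative Hopf algebra over a $\Q$-algebra annihilates products \cite{reutenauer}, the interpolation argument above being a self-contained proof in the present setting.
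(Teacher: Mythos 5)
Your proof is correct, but it takes a genuinely different route from the paper's for the key step. The paper argues on the coalgebra side: it observes that the diagonal series $\calD_Y$, viewed with coefficients in $(\ncs{\K}{Y},\ministuffle_{\varphi})$, is group-like in the sense of Lemma \ref{Friedrichs}, invokes the log-exp correspondence of Lemma \ref{lemme2} to conclude that $\log\calD_Y=\sum_{w}w\otimes\pi_1(w)$ is primitive (after checking summability), and reads off that each $\pi_1(w)$, hence each $\pi_1(S)=\sum_w\scal{S}{w}\pi_1(w)$, is primitive; the converse inclusion and idempotence then come from the Friedrichs criterion exactly as in your fact \emph{(a)}. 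You instead work on the dual, algebra side: you reduce primitivity of $\pi_1(S)$ to the identity $\check\pi_1(u\ministuffle_{\varphi}v)=0$ and prove it by recognizing $\check\pi_1$ as the first Eulerian idempotent $\log^{\ast}(\mathrm{id})$ of the commutative Hopf algebra $\calH_{\ministuffle_{\varphi}}^{\vee}$, using that the Adams operations $\mathrm{id}^{\ast N}$ are $\ministuffle_{\varphi}$-endomorphisms (this is where commutativity of $\varphi$ enters, via Theorem \ref{delta_phi}) and differentiating the polynomial interpolation in $N$ to show $\check\pi_1$ is an $\epsilon$-derivation. Both arguments are sound and both ultimately rest on commutativity plus $\Q\subseteq\K$; the paper's is shorter because Lemmas \ref{Friedrichs} and \ref{lemme2} are already in place, while yours is more self-contained, makes the role of commutativity completely explicit, and proves the sharper dual statement $\check\pi_1(\calI)=0$, which is precisely what feeds the orthogonal decomposition $\K_+\pol{Y}=\calP\overset{\perp}{\oplus}\calI$ of Proposition \ref{L2bis}. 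One small caution: the length grading is only a filtration for $\ministuffle_{\varphi}$ (the $\varphi(a,b)$ term drops the degree), so your ``weight-homogeneous component'' bookkeeping should be read on the $\Delta_{\conc}$ side, which does preserve length --- and that is all the local finiteness and polynomiality in $N$ actually require.
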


\begin{proof}[Proof (sketch)]
The proof follows
the lines of \cite{reutenauer} with the difference that $\pi_1(w)$  might not be a polynomial and the operator defined in Lemma~\ref{Friedrichs} is not a genuine co-product. The diagonal series $\calD_Y$ (when considered as a series in $\ncs{(\ncs{\K}{Y})}{Y}$, the coefficient ring, $\ncs{\K}{Y}$, being endowed with the $\ministuffle_{\varphi}$ product) is group-like in the sense of Lemma~\ref{Friedrichs}. Then, using 
$$\log(\calD_Y)=\sum_{w\in Y^*}w\otimes \pi_1(w)$$
(which can be established by summability of the family $(w\otimes
\pi_1(w))_{w\in Y^*}$; but 
remember that the $\pi_1(w)$ are, in general,
series\footnote{In greater  
detail, this equality amounts to checking the summability of the family 
$$
\Big(\frac{(-1)^{k-1}}{k}w\otimes\scal{w}{u_1\ministuffle_{\varphi}\cdots \ministuffle_{\varphi}u_k}u_1\cdots u_k
\Big)_{\smallmatrix w\in Y^*,\ k\geq 1\\ u_1,\dots,u_k\in Y^+\endsmallmatrix}
$$
 (which is immediate) and rearranging the sums.}), one gets that 
$\pi_1(w)$ is a primitive series for all $w$. Now, from
$$\pi_1(S)=\sum_{w\in Y^*}\scal{S}{w}\pi_1(w),$$
one has $\pi_1(S)\in \mathrm{Prim}(\ncs{\K}{Y})$. Conversely, from
Friedrichs' criterion, 
one gets $\pi_1(S)=S$ if $S\in \mathrm{Prim}(\ncs{\K}{Y})$. 
\end{proof}

\bigskip
In the remainder of the paper, we suppose that $\varphi$ is moderate
(and still dualizable, associative and commutative).
\begin{definition}[\sc Projectors, \cite{reutenauer}]\label{Eulerian} 
Let $\Iplus:\KY\longrightarrow\KY$ be the linear mapping defined by
\begin{eqnarray*}
\Iplus(1_{Y^*})=0,&\mbox{and}&\text{for all }w\in Y^+,\ \Iplus(w)=w.
\end{eqnarray*}
One defines\footnote{The series below are summable because the family $(\Iplus^{\star n})_{n\geq 0}$ is locally nilpotent (see \cite{BDHMT} for complete proofs). Note that this definition gives the same result as the computation of the adjoint of $\check\pi_1$ given in \eqref{pi-1-first}.}
\begin{eqnarray*}
\pi_1:=\log(e+\Iplus)=\Sum_{n\ge1}\Frac{(-1)^{n-1}}n\Iplus^{\star n},&\mbox{where}&
\Iplus^{\star n}:=\conc_{n-1}\circ\Iplus^{\otimes n}\circ\Delta^{(n-1)}_{\ministuffle_{\varphi}}.
\end{eqnarray*}
\end{definition}

It follows immediately that
\begin{equation}
\exp(\pi_1)=e+\Sum_{n\ge1}\Frac{1}{n!}\pi_1^{\star n}=\sum_{n\ge0}\pi_n,
\end{equation}
where $e={1}_{\KY}\circ\epsilon$ is the orthogonal complement of $\Iplus$ and neutral for the convolution product.
The $\pi_n$ so obtained is called the $n$-th Eulerian projector.
\begin{lemma}
The endomorphism $\check\pi_1$ defined in \eqref{check-pi-1} is the
adjoint of $\pi_1$. 
One has
\begin{equation*}
\check\pi_1=\Sum_{n\ge1}\Frac{(-1)^{n-1}}n\ministuffle_{\varphi}^{(n-1)}\circ\Iplus^{\otimes n}\circ\Delta^{(n-1)}_{\tt conc}.
\end{equation*}
\end{lemma}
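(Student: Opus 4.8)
The plan is to realize the displayed operator as the transpose of the Eulerian $\pi_1$ of Definition~\ref{Eulerian} by dualizing, term by term, the convolution that defines $\pi_1$, and then to check that the resulting dual-convolution expression is literally the series (\ref{check-pi-1}). Both assertions of the lemma then drop out at once: the first display shows that this operator \emph{is} the adjoint of $\pi_1$, and the identification with (\ref{check-pi-1}) identifies it with the already-named $\check\pi_1$.

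First I would record the transposition rule relating the two convolution products. Writing $\check f$ for the adjoint of an endomorphism $f$ with respect to the pairing $\scal{u}{v}=\delta_{u,v}$, a one-line pairing computation (expand $\scal{(f\star g)(S)}{w}$, insert $\scal{fg}{w}=\scal{f\otimes g}{\Delta_{\tt conc}(w)}$, and move everything across the pairing) gives, for all $f,g$,
$$\check{(f\star g)}=\ministuffle_\varphi\circ(\check f\otimes\check g)\circ\Delta_{\tt conc}.$$
This is exactly where dualizability of $\varphi$ is used: it says that $\Delta_{\ministuffle_\varphi}$ is the transpose of $\ministuffle_\varphi$ (Definition~\ref{dual_comult}), while the transpose of $\conc$ is $\Delta_{\tt conc}$ tautologically, so transposition carries the convolution built from $(\conc,\Delta_{\ministuffle_\varphi})$ to the dual convolution built from $(\ministuffle_\varphi,\Delta_{\tt conc})$, with the factors in the \emph{same} order. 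I would then observe that $\mbox{I}_+$ is self-adjoint, $\check{\mbox{I}_+}=\mbox{I}_+$ (since $1_{Y^*}$ is its own dual vector), and iterate the rule, using associativity of $\ministuffle_\varphi$ and coassociativity of $\Delta_{\ministuffle_\varphi}$ to make the $n$-fold operators meaningful, to get
$$\check{(\mbox{I}_+^{\star n})}=\ministuffle_\varphi^{(n-1)}\circ\mbox{I}_+^{\otimes n}\circ\Delta^{(n-1)}_{\tt conc}.$$

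Since transposition is $\K$-linear, applying it to $\pi_1=\Sum_{n\ge1}\Frac{(-1)^{n-1}}n\mbox{I}_+^{\star n}$ termwise produces precisely the claimed formula for $\check\pi_1$, which simultaneously certifies that it is the adjoint of $\pi_1$. It then remains to reconcile this with (\ref{check-pi-1}) by evaluating on a word $w$: expanding $\Delta^{(n-1)}_{\tt conc}(w)=\sum_{u_1\cdots u_n=w}u_1\otimes\cdots\otimes u_n$, applying $\mbox{I}_+^{\otimes n}$ to discard every decomposition having an empty factor, and finally applying $\ministuffle_\varphi^{(n-1)}$, one recovers $\sum_{u_1\cdots u_n=w,\ u_i\in Y^+}u_1\ministuffle_\varphi\cdots\ministuffle_\varphi u_n$, which is term-for-term the summand of (\ref{check-pi-1}).

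The step that needs genuine care, as opposed to bookkeeping, is the legitimacy of transposing the infinite series term by term, which rests on summability. On the polynomial side this is harmless: each $w$ has only finitely many factorizations into non-empty factors, so $\check{(\mbox{I}_+^{\star n})}(w)=0$ once $n>|w|$ and the sum over $n$ is finite on every $w$. On the series side I would invoke moderateness (the hypothesis in force, as in the footnote to Definition~\ref{Eulerian}), which guarantees local nilpotency of $(\mbox{I}_+^{\star n})_{n\ge1}$, hence that $\pi_1$ is well defined and that its transpose may be computed degree by degree. I also expect the cleanest route to the iterated identity $\check{\Delta^{(n-1)}_{\ministuffle_\varphi}}=\ministuffle_\varphi^{(n-1)}$ is an induction on $n$ from the base case of Definition~\ref{dual_comult}, re-bracketing by coassociativity, rather than a single global computation.
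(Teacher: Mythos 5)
Your proposal is correct and is exactly the elaboration the paper intends by its one-word proof (``Immediate''): transpose the convolution series $\pi_1=\sum_{n\ge1}\frac{(-1)^{n-1}}{n}\,\mbox{I}_+^{\star n}$ term by term using the self-adjointness of $\mbox{I}_+$ and the fact that dualizability exchanges $(\conc,\Delta_{\ministuffle_{\varphi}})$ with $(\ministuffle_{\varphi},\Delta_{\tt conc})$, then evaluate on a word to recover Eq.~(\ref{check-pi-1}). Your attention to summability (finitely many factorizations of a word into non-empty factors on the polynomial side, moderateness on the series side) matches the justification the paper delegates to the footnote of Definition~\ref{Eulerian}.
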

\begin{proof}
Immediate.
\end{proof}
\begin{proposition}\label{L2bis} 
{\sc (Graph of $\pi_1$, values and its exponential
    as resolution of\break unity)}{\bf.}

\begin{enumerate}
\item For all $Y$ and $\varphi$ (moderate, associative,
  commutative and dualizable), 
one has
\begin{equation*}
\log\calD_Y=\sum_{w\in Y^+}w\otimes\pi_1(w)=\sum_{w\in Y^+}\check\pi_1(w)\otimes w.
\end{equation*}
\item Let $P\in\KY$ be a primitive polynomial, for $\Delta_{\ministuffle_{\varphi}}$. Then
\begin{eqnarray*}
\pi_1(P)=P,&\text{for all }k,n\in\N_+,&\pi_n(P^k)=\delta_{k,n}P^k.
\end{eqnarray*}
\item 
One has
\begin{equation}\label{eq3}
\mathrm{Id}_{\ncp{\K}{Y}}=e+I_+=\sum_{n\geq 0}\pi_n.
\end{equation}
Equation \eqref{eq3} is a resolution of identity with mutually orthogonal summands.
\item\label{4} We have
\begin{equation*}
\K_+\pol{Y}=\calP{\overset{\perp}{\oplus}}\calI=\calP\oplus\Big(\bigoplus_{n\geq 2}\pi_n(\ncp{\K}{Y})\Big).
\end{equation*}
\end{enumerate}
\end{proposition}
\begin{proof}
The only statement which cannot be proved through an isomorphism with
the shuffle algebra is the first equality of 
the point~(4).
The fact that $\calP\cap\calI=\{0\}$ comes from Friedrichs' criterion, and $\calP+\calI=\K_+\pol{Y}$ is a consequence of the fact
(seen again through any isomorphism with the shuffle algebra) that 
\begin{equation*}
(\calH_{{\ministuffle_{\varphi}}})_+=\mathrm{span}_{\K}(\bigcup_{n\geq 1}
{(P_1\ministuffle_{\varphi}\cdots\ministuffle_{\varphi}P_n)}_{P_i\in\mathrm{Prim}(\calH_{{\ministuffle_{\varphi}}})}).
\qedhere
\end{equation*} 
\end{proof}
\begin{remark}\label{ree_thm1}
\begin{enumerate}
\item The first equality of Proposition~\ref{L2bis}.(4), i.e.,
\begin{equation*}
\K_+\pol{Y}=\calP{\overset{\perp}{\oplus}}\calI,
\end{equation*}
is known as the theorem of Ree \cite{Ree}.
\item The projector on $\calP$ parallel to $\calI$ is not in general in the descent algebra (see \cite{ncsf3}).
This proves that, although they are isomorphic, the spaces $\calI$ and $\bigoplus_{n\geq 2}\pi_n(\ncp{\K}{Y})$ are, in general, not identical.  
\end{enumerate}
\end{remark}
Proposition~\ref{L2bis}.(1) leads to 
the following corollary.
\begin{corollary}
We have $\pi_1(1_{Y^*})=\check\pi_1(1_{Y^*})=0$ and, for all $w\in Y^+$,
\begin{align*}
\pi_1(w)&=w+\sum_{k\ge2}\frac{(-1)^{k-1}}k\sum_{u_1,\ldots,u_k\in Y^+}\scal{w}{u_1\ministuffle_{\varphi}\cdots\ministuffle_{\varphi}u_k}u_1\ldots u_k,\\
{\check\pi_1}(w)&=w+\sum_{k\ge2}\frac{(-1)^{k-1}}k\sum_{u_1,\ldots,u_k\in Y^+}\scal{w}{u_1\ldots u_k}u_1\ministuffle_{\varphi}\cdots\ministuffle_{\varphi} u_k.
\end{align*}
In particular, $\pi_1(1_{Y^*})=\check\pi_1(1_{Y^*})=0$, for any $y\in
Y$, ${\check\pi_1}(y)=y$, and
\begin{equation*}
\pi_1(y)=y+\sum_{l\ge2}\frac{(-1)^{l-1}}l\sum_{x_1\ldots x_l\in Y^*}\gamma^{y}_{x_1,\ldots,x_l}\;{x_1\ldots x_l}.
\end{equation*}
\end{corollary}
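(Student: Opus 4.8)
The plan is to read both endomorphisms off their definitions coefficientwise; the formulas for general $w$ are essentially a re-reading of what is already written, and the only genuine computation is the evaluation of $\pi_1$ on a letter.

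First I would establish the formula for $\pi_1(w)$ directly from Definition \ref{Eulerian}, namely $\pi_1=\sum_{n\ge1}\frac{(-1)^{n-1}}{n}I_+^{\star n}$ with $I_+^{\star n}=\conc_{n-1}\circ I_+^{\otimes n}\circ\Delta^{(n-1)}_{\ministuffle_{\varphi}}$. Expanding $\Delta^{(n-1)}_{\ministuffle_{\varphi}}(w)=\sum_{u_1,\ldots,u_n\in Y^*}\scal{w}{u_1\ministuffle_{\varphi}\cdots\ministuffle_{\varphi}u_n}\,u_1\otimes\cdots\otimes u_n$, the operator $I_+^{\otimes n}$ discards every tensor in which some $u_i=1_{Y^*}$, and $\conc_{n-1}$ then concatenates, so that $I_+^{\star n}(w)=\sum_{u_1,\ldots,u_n\in Y^+}\scal{w}{u_1\ministuffle_{\varphi}\cdots\ministuffle_{\varphi}u_n}\,u_1\cdots u_n$. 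The term $n=1$ gives $I_+(w)=w$ for $w\in Y^+$ and $0$ for $w=1_{Y^*}$; collecting the rest yields the stated expression for $\pi_1(w)$ and $\pi_1(1_{Y^*})=0$. This recovers exactly \pref{pi-1-first}.

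Next I would treat $\check\pi_1$ in the same mechanical way, using $\check\pi_1=\sum_{n\ge1}\frac{(-1)^{n-1}}{n}\ministuffle_{\varphi}^{(n-1)}\circ I_+^{\otimes n}\circ\Delta^{(n-1)}_{\conc}$. Since $\Delta^{(n-1)}_{\conc}(w)=\sum_{u_1\cdots u_n=w}u_1\otimes\cdots\otimes u_n$, the factor $I_+^{\otimes n}$ keeps only the factorizations into nonempty words $u_i\in Y^+$, after which $\ministuffle_{\varphi}^{(n-1)}$ produces $u_1\ministuffle_{\varphi}\cdots\ministuffle_{\varphi}u_n$; writing the constraint $u_1\cdots u_n=w$ as the coefficient $\scal{w}{u_1\cdots u_n}$ and peeling off $n=1$ gives the stated formula and $\check\pi_1(1_{Y^*})=0$. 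This is a rereading of \pref{check-pi-1}. Equivalently, both displayed identities are the coefficientwise content of $\Haus_Y=\log\calD_Y$ read off in the two slots of the first identity of Proposition \ref{L2bis}, the logarithm being taken in the graph algebra where multiplication is $\ministuffle_{\varphi}$ on the left and $\conc$ on the right.

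Finally I would prove the two ``in particular'' statements. For $\check\pi_1(y)$, a single letter admits no factorization $u_1\cdots u_k=y$ with $k\ge2$ and all $u_i\in Y^+$, so only $k=1$ survives and $\check\pi_1(y)=y$. For $\pi_1(y)$ the substance is a length estimate showing that $\scal{y}{u_1\ministuffle_{\varphi}\cdots\ministuffle_{\varphi}u_k}=0$ unless every $u_i$ is a single letter. By induction on $(\mathrm{Rec})$ one checks that the shortest word occurring in $u\ministuffle_{\varphi}v$ has length $\max(\abs{u},\abs{v})$, the only length-lowering branch being $\varphi(a,b)(u\ministuffle_{\varphi}v)$; by associativity the shortest word in $u_1\ministuffle_{\varphi}\cdots\ministuffle_{\varphi}u_k$ has length $\max_i\abs{u_i}$, so a length-$1$ coefficient forces all $\abs{u_i}=1$, i.e. $u_i=x_i\in Y$. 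For such inputs the same induction shows the length-$1$ component of $x_1\ministuffle_{\varphi}\cdots\ministuffle_{\varphi}x_k$ to be the iterate $\varphi(\cdots\varphi(\varphi(x_1,x_2),x_3)\cdots,x_k)=\varphi(x_1\cdots x_k)$, whence $\scal{y}{x_1\ministuffle_{\varphi}\cdots\ministuffle_{\varphi}x_k}=\scal{y}{\varphi(x_1\cdots x_k)}=\gamma^{y}_{x_1,\ldots,x_k}$. Substituting into the formula for $\pi_1(w)$ at $w=y$ and renaming $k$ to $l$ gives precisely the displayed polynomial, which is the series $(P)$ of Theorem \ref{structure}.

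The main obstacle is this last length estimate: everything else is an unfolding of the iterated coproducts together with the observation that $I_+$ simply kills the empty-word component. The delicate point is to argue that a single-letter output forces single-letter inputs \emph{and} that the surviving coefficient is exactly the extended structure constant $\gamma^{y}_{x_1,\ldots,x_k}$ with no spurious multiplicities; this requires combining the bookkeeping of the minimal-length term in $(\mathrm{Rec})$ with the associativity of $\varphi$.
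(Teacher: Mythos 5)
Your proposal is correct and follows essentially the same route as the paper, which derives the corollary by unfolding Definition \ref{Eulerian}, Eq.~(\ref{check-pi-1}) and Proposition \ref{L2bis}.\ref{1} coefficientwise; your added length estimate showing that $\scal{y}{u_1\ministuffle_{\varphi}\cdots\ministuffle_{\varphi}u_k}$ vanishes unless all $u_i$ are letters, with the surviving coefficient equal to $\gamma^{y}_{x_1,\ldots,x_k}$, is exactly the observation the paper uses (in dual form) in the proof of Theorem \ref{structure}, so you have merely made explicit a step the paper leaves implicit.
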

\begin{remark}
We already knew that, as soon as $\varphi$ is associative,
${\check\pi_1}(w)$ is a polynomial. Here, because $\varphi$ is
moderate, dualizable, and associative, $\pi_1(w)$ is also a
polynomial, and because $\varphi$ is commutative, 
it is primitive.  
\end{remark}

\begin{proposition}\label{pi_1}
We have:
\begin{enumerate}
\item The expression of $\sigma_Y(t)$ is given by
\begin{equation*}
\sigma_Y(t)=\Sum_{n\ge0}{t^n}\Sum_{w\in Y^*}w\otimes\pi_n(w)
=\Sum_{n\ge0}{t^n}\Sum_{w\in Y^*}{\check\pi_n}(w)\otimes w,
\end{equation*}
where $\check\pi_n$ is the adjoint of $\pi_n$. These are given by $\pi_n(1_{Y^*})=\check\pi_n(1_{Y^*})=\delta_{0,n}$ and, for all $w\in Y^+$,
\begin{align*}
\pi_n(w)&=\Frac1{n!}\Sum_{u_1,\ldots,u_n\in Y^+}\scal{w}{\check\pi_1(u_1)\ministuffle_{\varphi}\cdots\ministuffle_{\varphi}\check\pi_1(u_n)}\pi_1(u_1)\ldots\pi_1(u_n),\\
\check\pi_n(w)&=\Frac1{n!}\Sum_{u_1,\ldots,u_n\in Y^+}\scal{w}{\pi_1(u_1)\ldots\pi_1(u_n)}\check\pi_1(u_1)\ministuffle_{\varphi}\cdots\ministuffle_{\varphi} \check\pi_1(u_n)).
\end{align*}

\item  For any $w\in Y^*$, we  have
\begin{align*}
w&=\sum_{k\ge0}\frac1{k!}\sum_{u_1,\ldots,u_k\in Y^+}\scal{w}{u_1\ministuffle_{\varphi}\cdots\ministuffle_{\varphi} u_k}\pi_1(u_1)\ldots\pi_1(u_k)\\
&=\sum_{k\ge0}\frac1{k!}\sum_{u_1,\ldots,u_k\in Y^+}\scal{w}{u_1\ldots u_k}{\check\pi_1}(u_1)\ministuffle_{\varphi}\cdots\ministuffle_{\varphi}{\check\pi_1}(u_k).
\end{align*}
\end{enumerate}
In particular, for any $y_s\in Y$, we have $y_s=\check\pi_1(y_s)$ and
\begin{equation*}
y_s=\sum_{k\ge1}\frac1{k!}\sum_{y_{s_1},\ldots,y_{s_k}\in Y}\gamma^{y_s}_{y_{s_1},\ldots,y_{s_k}}\pi_1(y_{s_1})\ldots\pi_1(y_{s_k}).
\end{equation*}
\end{proposition}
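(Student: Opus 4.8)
The plan is to reduce the entire statement to the graph calculus on $\mathrm{End}(\KY)$ and, more precisely, to the single identity $\Haus_Y=\Gamma(\pi_1)$ furnished by Proposition~\ref{L2bis}.1. Recall that $f\mapsto\Gamma(f)=\sum_{w}w\otimes f(w)$ is a \emph{faithful} representation of the convolution algebra $(\mathrm{End}(\KY),\star)$ by double series, satisfying $\Gamma(f)\Gamma(g)=\Gamma(f\star g)$ for the product that uses $\ministuffle_{\varphi}$ on the left leg and concatenation on the right, with $\calD_Y=\Gamma(\mathrm{Id})$. Since $\Haus_Y=\log\calD_Y$ and $\sigma_Y(t)=\exp(t\Haus_Y)$ are formed with that graph product, I would first transport both $\log$ and $\exp$ through $\Gamma$: the local nilpotence of $(\mathrm{I}_+^{\star n})_{n\ge0}$ recorded in Definition~\ref{Eulerian} guarantees the required summability, so that $\Haus_Y=\Gamma\big(\log_\star(e+\mathrm{I}_+)\big)=\Gamma(\pi_1)$ and hence $\sigma_Y(t)=\Gamma\big(\exp_\star(t\pi_1)\big)$.

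Next I would expand the convolution exponential and match its homogeneous pieces with the Eulerian projectors. By Definition~\ref{Eulerian} and equation~(\ref{log}) one has $\exp_\star(t\pi_1)=\sum_{n\ge0}\frac{t^n}{n!}\pi_1^{\star n}=\sum_{n\ge0}t^n\pi_n$ with $\pi_0=e$ and $\pi_n=\frac1{n!}\pi_1^{\star n}$ for $n\ge1$, the orthogonality and resolution of identity $\mathrm{Id}=\sum_{n\ge0}\pi_n$ of Proposition~\ref{L2bis}.3 making this decomposition legitimate. Applying $\Gamma$ term by term gives $\sigma_Y(t)=\sum_{n\ge0}t^n\Gamma(\pi_n)=\sum_{n\ge0}t^n\sum_{w\in Y^*}w\otimes\pi_n(w)$, which is the first expression of part~1. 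The second expression, in terms of $\check\pi_n$, is obtained identically from the symmetric presentation $\Haus_Y=\sum_{w\in Y^+}\check\pi_1(w)\otimes w$ of Proposition~\ref{L2bis}.1, i.e.\ by rerunning the same argument inside the dual bialgebra $\calH_{{\ministuffle_{\varphi}}}^{\vee}$, where $\check\pi_1$ is the first Eulerian projector and $\check\pi_n$, its $n$-th, is the adjoint of $\pi_n$.

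For the closed forms I would expand $\pi_1^{\star n}$ straight from~(\ref{convprod}): since $\pi_1(1_{Y^*})=0$, one gets $\pi_1^{\star n}(w)=\sum_{u_1,\dots,u_n\in Y^+}\scal{w}{u_1\ministuffle_{\varphi}\cdots\ministuffle_{\varphi}u_n}\,\pi_1(u_1)\cdots\pi_1(u_n)$. Dividing by $n!$ and summing over $n$ already proves part~2 (first line, because $\sum_n\pi_n=\mathrm{Id}$) and, dually, its second line with $\check\pi_1$ and $\Delta_{\tt conc}$. To recover the $\check\pi_1$-decorated formula of part~1, I would insert the adjunction $\scal{\pi_1(P)}{u}=\scal{P}{\check\pi_1(u)}$ factor by factor: rewriting $\scal{w}{\check\pi_1(u_1)\ministuffle_{\varphi}\cdots\ministuffle_{\varphi}\check\pi_1(u_n)}=\scal{\Delta^{(n-1)}_{\ministuffle_{\varphi}}(w)}{\check\pi_1(u_1)\otimes\cdots\otimes\check\pi_1(u_n)}$, expanding $\Delta^{(n-1)}_{\ministuffle_{\varphi}}(w)$ as a sum of tensors $w_{(1)}\otimes\cdots\otimes w_{(n)}$ and pushing each $\check\pi_1$ back across as $\pi_1$, the inner sums $\sum_{u\in Y^+}\scal{\pi_1(w_{(i)})}{u}\pi_1(u)=\pi_1(\pi_1(w_{(i)}))$ collapse by idempotence of $\pi_1$ (Lemma~\ref{primitive_proj}) to $\pi_1(w_{(i)})$, so the whole expression returns to $\pi_1^{\star n}(w)$; this shows the two formulas for $\pi_n(w)$ agree. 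Finally, the ``in particular'' assertion is the first line of part~2 at $w=y_s$: only the fully contracted terms survive, for which $\scal{y_s}{y_{s_1}\ministuffle_{\varphi}\cdots\ministuffle_{\varphi}y_{s_k}}=\gamma^{y_s}_{y_{s_1},\dots,y_{s_k}}$ is the iterated structure constant introduced after~(\ref{recurY}), while $\check\pi_1(y_s)=y_s$ is the letter case of the corollary following Proposition~\ref{L2bis}.

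The only genuinely delicate point is the passage of the infinite series defining $\log$ and $\exp$ through $\Gamma$: one must verify that the double-series families at issue are summable, so that $\Gamma$ commutes with these limits and with the graph product. This rests precisely on the local nilpotence noted in Definition~\ref{Eulerian} together with the finiteness of the factorizations $u_1\cdots u_k=w$ into nonempty words. Once this bookkeeping is in place, the identification $\pi_n=\frac1{n!}\pi_1^{\star n}$ and the idempotence collapse are purely formal, and the two dual presentations follow by the adjunction between $\pi_n$ and $\check\pi_n$.
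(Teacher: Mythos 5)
Your proposal is correct and takes essentially the approach the paper intends: the paper's own proof is just the phrase ``Direct computation,'' and your expansion of $\pi_n=\frac1{n!}\pi_1^{\star n}$ via the graph representation $\Gamma$, the term-by-term adjunction between $\pi_n$ and $\check\pi_n$, and the idempotence collapse reconciling the two formulas for $\pi_n(w)$ are exactly that computation written out in full. The summability caveats you flag are the right ones and are covered by the moderation hypothesis and the local nilpotence of $(\mbox{I}_+^{\star n})_{n\geq 0}$ already invoked in Definition \ref{Eulerian}.
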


\begin{proof} Direct computation.
\end{proof}

Applying the tensor product\footnote{Extended to series.} of isomorphisms of algebras\footnote{In order to clarify the ideas at this point,
the reader can also take the alphabet duplication
isomorphism $$\text{for all }\bar y\in\bar Y,\ \bar y=\alpha(y),$$
and use  $\{w\}_{w\in\bar Y^*}$ as a basis for $\K\pol{\bar Y}$.} $\alpha\otimes\mathrm{Id}_Y$ to the diagonal series $\calD_Y$,
we obtain a group-like element, and then computing the logarithm of this element
(or equivalently, applying $\alpha\otimes\mathrm{Id}_Y$ to $\Haus_Y$) we obtain $\calS$ which is, by Lemma~\ref{lemme2}, primitive:
\begin{equation}\label{calS}
\calS=\sum_{w\in Y^*}\alpha(w)\;\pi_1(w)=\sum_{w\in Y^*}\alpha\circ\check\pi_1(w)\;w.
\end{equation}

\begin{lemma}\label{L2}
For any $w\in Y^+$, 
one has $\pi_1(w)\in\mathrm{Prim}(\ncp{\K}{Y})$.
\end{lemma}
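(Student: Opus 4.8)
The plan is to deduce the statement from the primitivity of the series $\calS$ introduced in \pref{calS}, by extracting coefficients. Before that, I would record that $\pi_1(w)$ genuinely lands in $\ncp{\K}{Y}$ and not merely in $\ncs{\K}{Y}$, since this is exactly the point where moderateness is used. The explicit expansion of $\pi_1(w)$ given in the Corollary following Proposition \ref{L2bis} is \emph{a priori} an infinite sum over $k\ge1$ and over $u_1,\dots,u_k\in Y^+$; because $\varphi$ is moderate (condition \pref{moderate}), only finitely many tuples contribute to the coefficient of any fixed word and the sum over $k$ terminates, so $\pi_1(w)$ is a polynomial. (This is precisely the content of the remark recorded after that Corollary, namely that under the moderate hypothesis $\pi_1(w)\in\KY$.)

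Next I would establish primitivity. Recall from \pref{calS} that $\calS=\sum_{w\in Y^*}\alpha(w)\,\pi_1(w)$ is primitive, where $\alpha$ is the alphabet-duplication isomorphism onto $\K\pol{\bar Y}$, whose letters $\bar Y$ commute with everything carried by $Y$ and are inert for the $\ministuffle_{\varphi}$-structure. Applying $\mathrm{Id}_{\bar Y}\otimes\Delta_{\ministuffle_{\varphi}}$ to $\calS$ therefore yields $\sum_{w}\alpha(w)\,\Delta_{\ministuffle_{\varphi}}(\pi_1(w))$, with each $\alpha(w)$ sitting as a coefficient in the $\bar Y$-slot, while the primitivity of $\calS$ asserts that this equals $\calS\otimes 1_{Y^*}+1_{Y^*}\otimes\calS=\sum_{w}\alpha(w)\bigl(\pi_1(w)\otimes 1_{Y^*}+1_{Y^*}\otimes\pi_1(w)\bigr)$. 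Since $\{\alpha(w)\}_{w\in Y^*}=\bar Y^*$ is a $\K$-basis of $\K\pol{\bar Y}$, I can equate the coefficient of each $\alpha(w)$ and obtain $\Delta_{\ministuffle_{\varphi}}(\pi_1(w))=\pi_1(w)\otimes 1_{Y^*}+1_{Y^*}\otimes\pi_1(w)$ for every $w$; that is, each $\pi_1(w)$ is primitive. Combined with the polynomiality of the first paragraph, this gives $\pi_1(w)\in\mathrm{Prim}(\ncp{\K}{Y})$.

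As an independent check, and in fact a shorter route, one can bypass $\calS$ altogether: Lemma \ref{primitive_proj} states that the endomorphism $\pi_1$ of $\ncs{\K}{Y}$ is a projector with image exactly $\mathrm{Prim}(\ncs{\K}{Y})$. Applying this to the series reduced to the single word $w$ shows that $\pi_1(w)$ is a primitive series, and polynomiality then places it in $\mathrm{Prim}(\ncp{\K}{Y})$. This route and the $\calS$-route are two faces of the same fact, namely Lemma \ref{lemme2}: the logarithm of a group-like series is primitive.

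The main obstacle is not the primitivity itself, which is essentially a transcription of Lemmas \ref{lemme2} and \ref{primitive_proj}, but the careful bookkeeping that legitimizes the coefficient extraction: one must check that $\Delta_{\ministuffle_{\varphi}}$ acts only on the $Y$-tensorands while $\alpha(w)$ remains inert, and that the families involved are summable so that reindexing and equating coefficients are valid operations in $\K\pol{\bar Y}\,\hat\otimes\,\ncs{\K}{Y}^{\hat\otimes 2}$. Conceptually, the genuine content is the use of moderateness to guarantee that the primitive \emph{series} $\pi_1(w)$ is in fact a \emph{polynomial}; without \pref{moderate} the conclusion, as phrased for $\ncp{\K}{Y}$, would fail.
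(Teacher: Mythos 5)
Your proposal is correct, and in fact your ``shorter route'' at the end \emph{is} the paper's entire proof: the authors dispose of Lemma \ref{L2} with the single line ``Immediate from lemma \ref{primitive_proj}.'' Your primary argument via the series $\calS$ of \pref{calS} is a legitimate but more laborious unwinding of the same fact, since both Lemma \ref{primitive_proj} and the primitivity of $\calS$ ultimately rest on $\log\calD_Y=\sum_{w}w\otimes\pi_1(w)$ being primitive coefficientwise (Lemma \ref{lemme2} applied to a group-like element); extracting the coefficient of $\alpha(w)$ from $\calS$ and applying Lemma \ref{primitive_proj} to the series reduced to $w$ are, as you say yourself, two faces of one computation, so the detour through the alphabet duplication buys nothing beyond the bookkeeping it forces you to justify. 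What you do add, and what the paper only records in the Remark following the Corollary to Proposition \ref{L2bis}, is the explicit observation that moderateness \pref{moderate} is what upgrades the primitive \emph{series} $\pi_1(w)$ to a \emph{polynomial}, which is genuinely needed for the conclusion to land in $\mathrm{Prim}(\ncp{\K}{Y})$ rather than $\mathrm{Prim}(\ncs{\K}{Y})$; making that dependence explicit is a worthwhile clarification of the paper's terse treatment.
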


\begin{proof}
Immediate from Lemma~\ref{primitive_proj}.
\end{proof}

A primitive projector, $\pi:\KY\longrightarrow\KY$, is defined in the same way as a Lie projector by the three following conditions:
\begin{eqnarray}
\pi\circ\pi=\pi,&\pi(1_{Y^*})=0,&\pi(\KY)=\mathrm{Prim}(\KY)=\calP.
\end{eqnarray}
For example, $\pi_1$ defined in Definition~\ref{Eulerian} (see also Proposition~\ref{L2bis})
is a primitive projector which will be used to construct bases of $\calP$ and its enveloping algebra (see Theorem~\ref{bases} below). Another example of a primitive projector is the orthogonal projector on $\calP$ attached to the decomposition in Remark~\ref{ree_thm1}.
%

Now, for the remainder of the paper, let  $\calY=\{y_w\}_{w\in Y^+}$ (respectively $\calY_1=\{y_x\}_{x\in Y}$) be a copy of $Y^+$ (respectively $Y$).

Let us then equip $\K\pol{\calY}$ and $\K\pol{\calY_1}$
with $\bullet$ (the concatenation so denoted to be distinguished from the concatenation within $Y^+$) and $\shuffle$ (or equivalently by $\Delta_{\bullet}$ and $\Delta_{\minishuffle}$).

Thus, the Hopf algebras $(\K\pol{\calY},\bullet,1_{\calY^*},\Delta_{\minishuffle},\epsilon_{\calY^*})$
and $(\K\pol{\calY_1},\bullet,1_{\calY^*},\Delta_{\minishuffle},\epsilon_{\calY_1^*})$  
are connected, $\N$-graded, non-commutative and co-commu\-tative
bialgebras, and hence enveloping bialgebras (in fact, they are free
algebras but 
specially indexed to match our purpose).

Now we can state the following      
%
\begin{theorem}[\sc New letters as images]\label{phipi1}
Let $\pi:\KY\longrightarrow\KY$ be a primitive projector. Let  $\psi_{\pi}$ be the $\conc$-morphism defined by
\begin{eqnarray*}
\psi_{\pi}:\K\pol{\calY}&\longrightarrow&\KY,\\
y_w&\longmapsto&\psi_{\pi}(y_w)=\pi(w ).
\end{eqnarray*}
Then $\psi_{\pi}$ is surjective and a Hopf 
morphism.

Moreover, $\ker\psi_{\pi}=\calJ=\calJ_1+\calJ_2$, where $\calJ_1$ and
$\calJ_2$ are the two-sided ideals of $\ncp{\K}{\calY}$ generated by
\begin{equation*}
S_1=\{y_u-y_{\pi(u)}\}_{u\in Y^+}\mbox{ and }S_2=\{y_u\bullet y_v-y_v\bullet y_u-y_{[\pi(u),\pi(v)]}\}_{u,v\in Y^+},
\end{equation*} 
respectively,
where the indexing of the alphabet has been extended by linearity to polynomials, i.e.,
\begin{equation*}
y_{P}:=\sum_{w\in Y^+}\scal{P}{w}\,y_w.
\end{equation*}
\end{theorem}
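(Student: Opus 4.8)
The plan is to establish the three assertions in turn, leaving the kernel computation, which is the substantive part, for last. Throughout I use that $\varphi$ is moderate, so that by Theorem~\ref{structure} the bialgebra $\calB_\varphi=\calH_{\ministuffle_\varphi}$ is the enveloping bialgebra $U(\calP)$ of its Lie algebra of primitives $\calP=\mathrm{Prim}(\calH_{\ministuffle_\varphi})$, and that $\pi$ is a primitive projector, so $\pi^2=\pi$ and $\pi$ fixes $\calP$ pointwise.

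First I would check that $\psi_\pi$ is a Hopf morphism. It is a $\conc$-morphism by construction, so only compatibility with the coproducts and counits needs attention. Both $\Delta_{\ministuffle_\varphi}\circ\psi_\pi$ and $(\psi_\pi\otimes\psi_\pi)\circ\Delta_{\minishuffle}$ are algebra morphisms out of $(\K\pol{\calY},\bullet)$ — the first because $\Delta_{\ministuffle_\varphi}$ is a $\conc$-morphism (footnote to Lemma~\ref{Friedrichs}), the second because $\Delta_{\minishuffle}$ is a $\bullet$-morphism. On a generator $y_w$ both sides equal $\pi(w)\otimes 1_{Y^*}+1_{Y^*}\otimes\pi(w)$, since $y_w$ is primitive in the source and $\pi(w)\in\calP$ is primitive in the target; as $\K\pol{\calY}$ is free on $\calY$, the two morphisms coincide. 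Counit compatibility holds because $\pi(w)$ has no constant term, and both bialgebras being Hopf the map automatically commutes with the antipodes. For surjectivity, the image is the $\conc$-subalgebra generated by $\{\pi(w)\}_{w\in Y^+}$, which spans $\calP$; since $\KY=U(\calP)$ is generated by $\calP$, the image is all of $\KY$. The inclusion $\calJ\subseteq\ker\psi_\pi$ is a direct check on generators: $\psi_\pi(y_u-y_{\pi(u)})=\pi(u)-\pi(\pi(u))=0$ by $\pi^2=\pi$, and $\psi_\pi(y_u\bullet y_v-y_v\bullet y_u-y_{[\pi(u),\pi(v)]})=[\pi(u),\pi(v)]-\pi([\pi(u),\pi(v)])=0$ because $[\pi(u),\pi(v)]\in\calP$ (Lemma~\ref{lemme3}) is fixed by $\pi$. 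As $\ker\psi_\pi$ is a two-sided ideal, $\calJ=\calJ_1+\calJ_2\subseteq\ker\psi_\pi$.

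The reverse inclusion $\ker\psi_\pi\subseteq\calJ$ is where the work lies; I would obtain it by showing the induced surjection $\bar\psi_\pi\colon A\to\KY$, with $A:=\K\pol{\calY}/\calJ$, is an isomorphism, by building an explicit inverse. Writing $[\,\cdot\,]_\bullet$ for the commutator in $A$, the relations $S_2$ give $[y_u,y_v]_\bullet=y_{[\pi(u),\pi(v)]}$ in $A$, and by linearity of $P\mapsto y_P$ together with $\pi|_\calP=\mathrm{id}$ this upgrades to $[y_p,y_{p'}]_\bullet=y_{[p,p']}$ for all $p,p'\in\calP$. Hence $\iota\colon\calP\to A$, $p\mapsto y_p$, is a Lie-algebra morphism into $(A,[\,\cdot\,]_\bullet)$. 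The relations $S_1$ force each generator of $A$ into $\iota(\calP)$, since $y_w=y_{\pi(w)}=\iota(\pi(w))$, so $\iota(\calP)$ generates $A$ as an algebra. By the universal property of the enveloping algebra and the identification $\KY=U(\calP)$, the morphism $\iota$ extends to a surjective algebra morphism $U(\iota)\colon\KY\to A$.

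Finally I would compose: $\bar\psi_\pi\circ U(\iota)$ is an algebra endomorphism of $\KY=U(\calP)$ which on $p\in\calP$ equals $\psi_\pi(y_p)=\pi(p)=p$, hence is the identity. Thus $U(\iota)$ is also injective, so it is an isomorphism with inverse $\bar\psi_\pi$; in particular $\bar\psi_\pi$ is injective and $\ker\psi_\pi=\calJ$. The main obstacle is exactly this reverse inclusion: everything hinges on knowing that $\KY$ is the enveloping algebra $U(\calP)$, which is what moderateness secures through Theorem~\ref{structure}, and the generators $S_1,S_2$ of $\calJ$ are tailored so that $\iota\colon\calP\to A$ is a well-defined Lie morphism whose image generates $A$. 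Without the enveloping-algebra identification the kernel count would not close up.
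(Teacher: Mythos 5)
Your proof is correct, and for the substantive part --- the computation of $\ker\psi_{\pi}$ --- it takes a genuinely different route from the paper's. The paper argues by a spanning-plus-independence count: using $S_1$ it replaces each $y_u$ by $y_{\pi(u)}$ modulo $\calJ_1$, using $S_2$ it reorders products $y_{p_{i_1}}\bullet\cdots\bullet y_{p_{i_n}}$ (for an ordered basis $(p_i)$ of $\calP$) into decreasing order modulo $\calJ$ by induction on the number of inversions, so that $\K\pol{\calY}=\calJ+\calC$ with $\calC$ spanned by the decreasing products, and then invokes the PBW theorem to see that $\psi_{\pi}$ restricted to $\calC$ is injective, whence $\ker\psi_{\pi}=\calJ$. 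You instead build an explicit two-sided inverse of the induced map $\bar\psi_{\pi}:\K\pol{\calY}/\calJ\to\KY$: the relations $S_2$, bilinearized, make $p\mapsto y_p$ a Lie morphism $\iota$ from $\calP$ into the commutator Lie algebra of the quotient, the relations $S_1$ make its image generate the quotient, and the universal property of $U(\calP)\cong\KY$ (secured by Theorem \ref{structure}) yields $U(\iota)$ with $\bar\psi_{\pi}\circ U(\iota)=\mathrm{id}$, forcing both maps to be isomorphisms. Your route avoids the reordering combinatorics and makes transparent exactly where each family of relations is used; the paper's route uses PBW only as a linear-basis statement rather than the full universal property, and it exhibits the complement $\calC$ explicitly, which is what feeds into the subsequent construction of the dual bases $\{\Pi_w\}$ and $\{\Sigma_w\}$. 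Both arguments rest on the same pillar, which you correctly identify: moderateness forces $\KY$ to be the enveloping algebra of $\calP$, and without that identification neither kernel computation closes up.
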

\begin{proof}
The fact that $\psi_{\pi}$ is surjective is due to
$\pi(\ncp{\K}{Y})=\calP$ and to 
the fact that any enveloping algebra (here
$\calH_{{\ministuffle_{\varphi}}}$) is generated by its primitive
elements. The fact that $\psi_{\pi}$ is a Hopf 
morphism is due to a general property of enveloping algebras:
{\it if a 
morphism of AAU between two enveloping algebras sends the
primitive elements of the first 
to primitive elements of the second,
then it is a Hopf 
morphism}.

Let now $(p_i)_{i\in J}$ be an ordered ($J$ is endowed with a total
ordering $\prec_J$) basis\footnote{With the properties of $\varphi$
  here, the bialgebra $(\KY,{\tt
    conc},1_{Y^*},\Delta_{\ministuffle_{\varphi}},\epsilon)$ is
  isomorphic to $(\KY,{\tt
    conc},1_{Y^*},\Delta_{\minishuffle},\epsilon)$ in which the module
  of primitive elements is free, thus
  $\calP=\mathrm{Prim}(\ncp{\K}{Y})$ is free.} of
$\calP=\mathrm{Prim}(\ncp{\K}{Y})$, and 
let us recall that $\calJ=\calJ_1+\calJ_2$ denotes the two sided ideal generated by the elements $\calJ_i$ (itself generated by $S_i,\ i=1,2$).

First, we observe that the elements of $S_1\cup S_2$ are in the kernel of $\psi_{\pi_1}$, and then $\mathcal{J}\subset \ker\psi_{\pi_1}$.

On the other hand, for $u_1,u_2,\ldots ,u_n\in Y^+$, one has
\begin{equation}\label{projpi}
y_{u_1}\bullet y_{u_2}\bullet\cdots \bullet y_{u_n} \equiv y_{\pi(u_1)}\bullet y_{\pi(u_2)}\bullet\cdots \bullet y_{\pi(u_n)}\mod  \mathcal{J} 
\end{equation}
(in fact they are even equivalent $\hspace{-3mm}\mod  \mathcal{J}_1$), 
which amounts to 
saying that $\ncp{\K}{\calY}=\mathcal{J}+\lessdot \calP \gtrdot$, where $\lessdot \calP \gtrdot$ is the space ``generated by $\calP$'', in fact, generated by 
$$
\bigsqcup_{n\geq 0}\{y_{p_{i_1}}\bullet\cdots  \bullet y_{p_{i_n}}\}_{i_j\in J}.
$$ 
Now, by 
recurrence over the number of inversions, one can show, using $S_2$, that 
\begin{equation}\label{reorder}
 y_{p_{i_1}}\bullet\cdots  \bullet y_{p_{i_n}}\equiv 
 y_{p_{\sigma(i_1)}}\bullet\cdots  \bullet y_{p_{\sigma(i_n)}} \mod  \calJ ,
\end{equation} 
where $\sigma\in \mathfrak{S}_n$ is such that $\sigma(i_1)\succ_J \sigma(i_2)\succ_J   \cdots \succ_J \sigma{(i_n)}$ (large order reordering).

Let $\calC$ be the space generated by the elements 
\begin{equation}
\{y_{p_{j_1}}\bullet\cdots  \bullet y_{p_{j_n}}\}_{\smallmatrix j_1\succ_J
  j_2\succ_J   \cdots \succ_J j_n\\ n\geq 0\endsmallmatrix}\ .
\end{equation}
By (\ref{projpi}) and (\ref{reorder}), we get $\calJ+\calC=\ncp{\K}{\calY}$.

Now, due to the PBW theorem, the family of images 
\begin{equation}
\Big(\Phi_{\pi_1}(y_{p_{j_1}}\bullet y_{p_{j_2}}\bullet\cdots  \bullet
y_{p_{j_n}})\Big)_{\smallmatrix j_1\succ_J j_2\succ_J   \cdots  j_n\\ n\geq
  0\endsmallmatrix}
\end{equation}
is a basis of $\ncp{\K}{Y}$, which proves that $\Phi_{\pi_1}\hspace*{-2mm}\mid_{\calC} : \calC \rightarrow \ncp{\K}{Y}$ is an isomorphism and completely proves the claim.
\end{proof}
We now suppose that the alphabet $Y$ is totally ordered.
\begin{definition}\label{Pi}
\begin{enumerate}
\item Let $\{\Pi_l\}_{l\in \Lyn Y}$ and $\{\Pi_w\}_{w\in Y^*}$ be the families of elements of $\calP$ and $\KY$, respectively, obtained as follows:
\begin{alignat*}2
\Pi_{y_k}&=\pi_1(y_k),&&\mbox{for }k\ge1,\\
\Pi_{l}&=[\Pi_s,\Pi_r],&&\mbox{for }l\in\Lyn X,\mbox{ standard factorization of }l=(s,r),\\
\Pi_{w}&=\Pi_{l_1}^{i_1}\ldots \Pi_{l_k}^{i_k},\quad 
&&\mbox{for }w=l_1^{i_1}\ldots l_k^{i_k},l_1\succ_{lex}\cdots\succ_{lex} l_k,l_1\ldots,l_k\in\Lyn Y.
\end{alignat*}
\item Let $\{\Sigma_w\}_{w\in Y^*}$ be the family of the $\varphi$-deformed quasi-shuffle algebra obtained by duality with $\{\Pi_w\}_{w\in Y^*}$:
\begin{eqnarray*}
\text{for all }u,v\in Y^*,&&\scal{\Sigma_{v}}{\Pi_u}=\delta_{u,v}.
\end{eqnarray*}
\end{enumerate}
\end{definition}
A priori, the $\{\Sigma_w\}_{w\in Y^*}$ 
could be series. We prove first that, in this context, they 
are polynomials.
\begin{proposition}[\sc Adjoint of $\phi_{\pi_1}$]
Let $\phi_{\pi_1}$, be the $\conc$-endomorphism of algebras defined on the letters as follows:
\begin{eqnarray*}
\phi_{\pi_1}:\K\pol{Y}&\longrightarrow&\KY,\\
y_k&\longmapsto&\phi_{\pi_1}(y_k)=\pi_1(y_k).
\end{eqnarray*}
Then  $\phi_{\pi_1}$ is an automorphism with the following properties:
\begin{enumerate}
\item This automorphism is such that, for every $l\in\Lyn Y$, 
$$
\phi_{\pi_1}(P_l)=\Pi_l,
$$ 
where $P_l$ are the polynomials calculated with the mechanism of 
Definition~\ref{Pi}, 
setting $\varphi\equiv 0$ (or, equivalently, by \eqref{pi_l0} with $q=0$), i.e., within the shuffle algebra 
$(\KY,{\tt conc},1_{Y^*},\Delta_{\minishuffle},\epsilon)$.
\item This automorphism has an adjoint $\phi_{\pi_1}^\vee$
  within $\ncp{\K}{Y}$ which reads, on 
the words $w\in Y^*$, 
\begin{equation*}
\phi_{\pi_1}^\vee(w)=\sum_{k\geq 0} \sum_{y_{i_1}\cdots y_{i_k}\in Y}
\scal{w}{\pi_1(y_{i_1})\cdots \pi_1(y_{i_k})}\;y_{i_1}y_{i_2}\cdots y_{i_k}.
\end{equation*}
\item In the style of Definition~\ref{Eulerian}, one has
\begin{align*}
\phi_{\pi_1}&=e+\sum_{k\geq 1} \conc^{(k-1)}\circ(\pi_1\circ I_1)^{\otimes k}\circ\Delta_{\tt conc}^{(k-1)},\\
\phi_{\pi_1}^\vee&=e+\sum_{k\geq 1} \conc^{(k-1)}\circ(I_1\circ \check\pi_1)^{\otimes k}\circ\Delta_{\tt conc}^{(k-1)},
\end{align*}
where $I_1$ is the projector on ${\K}Y$ parallel to
 $\bigoplus_{n\neq 1}(\ncp{\K}{Y})_{n}$.
\item For all $w\in Y^*$, $\Sigma_w=(\phi_{\pi_1}^\vee)^{-1}(S_w)$.
\end{enumerate}
\end{proposition}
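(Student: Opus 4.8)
The plan is to handle the four assertions in turn, leaning on the structure theorem for the automorphism property and reserving the genuine work for the polynomiality in point (4). First I would observe that the polynomial $\pi_1(y)$ furnished by the corollary following Proposition~\ref{L2bis} is exactly the polynomial form $(P)$ of condition (3) of Theorem~\ref{structure}, since $\gamma^{y}_{x_1,\dots,x_l}=\scal{y}{\varphi(x_1\cdots x_l)}$; hence $\phi_{\pi_1}$ is the very isomorphism $\calB_\varphi\to(\KY,\conc,1_{Y^*},\Delta_{\minishuffle},\epsilon)$ constructed there, and in particular it is an automorphism of $(\KY,\conc)$ (alternatively: it is weight-preserving and, on each finite-dimensional weight component, unitriangular for the length filtration because $\pi_1(y)=y+(\text{strictly longer words})$, $\varphi$ being moderate). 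For point (1) I would use that a $\conc$-morphism is automatically a morphism for the bracket $[a,b]=ab-ba$ and induct on $|l|$: for $l=y_k$ one has $P_{y_k}=y_k$ (the undeformed $\varphi\equiv0$ projector fixes letters) and $\phi_{\pi_1}(y_k)=\pi_1(y_k)=\Pi_{y_k}$, while for $l$ with standard factorization $(s,r)$, $\phi_{\pi_1}(P_l)=[\phi_{\pi_1}(P_s),\phi_{\pi_1}(P_r)]=[\Pi_s,\Pi_r]=\Pi_l$. Multiplicativity of $\phi_{\pi_1}$ then upgrades this to $\phi_{\pi_1}(P_w)=\Pi_w$ for every $w\in Y^*$, a fact I will reuse.

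Point (2) is immediate from the definition of the adjoint: for a word $u=y_{i_1}\cdots y_{i_k}$ one has $\phi_{\pi_1}(u)=\pi_1(y_{i_1})\cdots\pi_1(y_{i_k})$, so $\scal{\phi_{\pi_1}^\vee(w)}{u}=\scal{w}{\phi_{\pi_1}(u)}$ and summing over $u$ yields the stated expansion. For point (3) I would evaluate the first formula on a word $w=y_{i_1}\cdots y_{i_m}$: in $\conc^{(k-1)}\circ(\pi_1\circ I_1)^{\otimes k}\circ\Delta_{\tt conc}^{(k-1)}(w)$ the occurrence of $I_1$ forces every block of the deconcatenation to be a single letter, so only $k=m$ survives and returns $\pi_1(y_{i_1})\cdots\pi_1(y_{i_m})=\phi_{\pi_1}(w)$, with $e$ accounting for $w=1_{Y^*}$. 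The formula for $\phi_{\pi_1}^\vee$ then follows by transposing term by term, using $\conc^\vee=\Delta_{\tt conc}$, $I_1^\vee=I_1$ (diagonal in the word basis) and $\pi_1^\vee=\check\pi_1$ (established just above), the transposition reversing the order of composition.

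For point (4), having $\phi_{\pi_1}(P_w)=\Pi_w$ and $\phi_{\pi_1}^\vee$ invertible, I would merely verify that $(\phi_{\pi_1}^\vee)^{-1}(S_w)$ is the basis dual to $\{\Pi_u\}$, where $S_w$ denotes Sch\"utzenberger's basis dual to the undeformed PBW basis $P_w$: for all $u,w$ one computes $\scal{(\phi_{\pi_1}^\vee)^{-1}(S_w)}{\Pi_u}=\scal{(\phi_{\pi_1}^\vee)^{-1}(S_w)}{\phi_{\pi_1}(P_u)}=\scal{S_w}{P_u}=\delta_{u,w}$, the middle equality being the adjunction $\scal{\phi_{\pi_1}^\vee(a)}{b}=\scal{a}{\phi_{\pi_1}(b)}$ applied to $a=(\phi_{\pi_1}^\vee)^{-1}(S_w)$. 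Uniqueness of the dual basis then gives $\Sigma_w=(\phi_{\pi_1}^\vee)^{-1}(S_w)$, and in particular $\Sigma_w\in\KY$.

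The main obstacle is not any single identity, each being formal once its ingredients are in place, but guaranteeing that the entire computation stays inside $\KY$ rather than $\ncs{\K}{Y}$, which is precisely the claim that the $\Sigma_w$ are polynomials. Two hypotheses do this work, and I would isolate them carefully. Moderateness makes each $\pi_1(y_k)$ a polynomial, so that $\phi_{\pi_1}$ is an honest automorphism of $\KY$ with polynomial inverse (this is exactly why Theorem~\ref{structure} applies). Dualizability forces $\varphi$, and hence each coefficient $\gamma^{y}_{x_1,\dots,x_l}=\scal{y}{\varphi(x_1\cdots x_l)}$, to have finite support in $y$; combined with the fact that any word occurring in $\phi_{\pi_1}(u)$ has weight that of $u$ and length at least $|u|$, this makes the sum defining $\phi_{\pi_1}^\vee(w)$ in point (2) finite and $\phi_{\pi_1}^\vee$ (together with its inverse) a genuine endomorphism of $\KY$. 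It is this finiteness step that promotes point (4) from an identity among series to the desired statement about polynomials.
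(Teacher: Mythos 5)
Your proposal is correct and follows essentially the same route as the paper's own proof: the automorphism property is pulled from Theorem \ref{structure}, point (1) comes from matching the recursions defining $P_l$ and $\Pi_l$, points (2) and (3) are direct computations from the definition of the adjoint, and point (4) is the same adjunction identity $\scal{\phi_{\pi_1}(P_u)}{\Sigma_v}=\scal{P_u}{\phi_{\pi_1}^\vee(\Sigma_v)}$ read in the opposite direction. Your closing paragraph on why everything stays in $\KY$ makes explicit a finiteness point the paper leaves implicit (note only that the relevant fact is that each $\varphi(v)$ is a \emph{polynomial}, so finitely many $y$ satisfy $\scal{v}{\pi_1(y)}\neq 0$ for fixed $v$, rather than dualizability per se), which is a welcome addition rather than a deviation.
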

\begin{proof}[Proof (sketch)]
It was proved in Theorem~\ref{structure} that the endomorphism
$\phi_{\pi_1}$ is an isomorphism. The recursions used to construct
$\Pi_l$ and $P_l$ prove that $\phi_{\pi_1}(P_l)=\Pi_l$, and then
$\phi_{\pi_w}(P_l)=\Pi_w$ for every word $w$. Now the expression of
$\phi_{\pi_1}$ is a direct consequence of the definition of
$\phi_{\pi_1}$. This implies at once the expression of
$\phi_{\pi_1}^{\vee}$ and the fact that
$\phi_{\pi_1}^\vee\in\mathrm{End}(\ncp{\K}{Y})$. 
The last equality 
comes from the following
\begin{equation*}
\delta_{u,v}=\scal{\Pi_u}{\Sigma_v}=\scal{\phi_{\pi_1}(P_u)}{\Sigma_v}=\scal{P_u}{\phi_{\pi_1}^\vee(\Sigma_v)},
\end{equation*}
which shows that, for all $w\in Y^*$,
$\phi_{\pi_1}^\vee(\Sigma_w)=S_w$
and the claim follows.
\end{proof}
We can now state
the following result.
\begin{theorem}\label{bases}
\begin{enumerate}
\item The family $\{\Pi_l\}_{l\in\Lyn Y}$ forms a basis of $\calP$.
\item The family $\{\Pi_w\}_{w\in Y^*}$ is a linear basis of $\KY$.
\item The family $\{\Sigma_w\}_{w\in Y^*}$ is a linear basis of the $\varphi$-shuffle algebra. 
\item The family $\{\Sigma_l\}_{l\in\Lyn Y}$ forms a pure transcendence basis of $(\KY,\ministuffle_{\varphi},1_{Y^*})$.
\end{enumerate}
\end{theorem}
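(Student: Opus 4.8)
The plan is to deduce all four assertions from the classical shuffle theory by transporting it through the pair of mutually adjoint isomorphisms produced in Proposition \ref{phi-pi1-adj}. Recall that $\phi_{\pi_1}$ is a $\conc$-automorphism of $\ncp{\K}{Y}$ which, by Theorem \ref{structure}, realizes a bialgebra isomorphism
$$(\KY,\conc,1_{Y^*},\Delta_{\minishuffle},\epsilon)\ \longrightarrow\ (\KY,\conc,1_{Y^*},\Delta_{\ministuffle_{\varphi}},\epsilon),$$
that it sends the classical shuffle-side Lyndon polynomials $P_l$ to $\Pi_l$ (first point of Proposition \ref{phi-pi1-adj}), and, being a $\conc$-morphism, that it therefore sends each PBW monomial $P_w=P_{l_1}^{i_1}\cdots P_{l_k}^{i_k}$ to $\Pi_w$. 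Dually, its adjoint $\phi_{\pi_1}^\vee$ is an algebra isomorphism $(\KY,\ministuffle_{\varphi})\to(\KY,\shuffle)$ carrying $\Sigma_w$ to the Schützenberger dual basis $S_w$ of the shuffle algebra (fourth point of Proposition \ref{phi-pi1-adj}). Everything will be the formal transport of the classical shuffle picture along these two maps.

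For points 1 and 2: since $\phi_{\pi_1}$ is a bialgebra isomorphism it maps the Lie algebra of primitive elements of the shuffle bialgebra bijectively onto $\calP=\prim(\calH_{\ministuffle_{\varphi}})$. As $\{P_l\}_{l\in\Lyn Y}$ is the classical Lyndon basis of that primitive Lie algebra \cite{reutenauer} and $\phi_{\pi_1}(P_l)=\Pi_l$, point 1 follows at once. For point 2, $\{P_w\}_{w\in Y^*}$ is the classical PBW basis of $\KY$; applying the linear automorphism $\phi_{\pi_1}$ and using $\phi_{\pi_1}(P_w)=\Pi_w$ shows that $\{\Pi_w\}_{w\in Y^*}$ is again a linear basis of $\KY$. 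Equivalently, point 2 is just the PBW theorem applied to the enveloping bialgebra $\calB_\varphi$ with the primitive basis produced in point 1.

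For point 3: the key is that $\phi_{\pi_1}^\vee$ is a linear automorphism of the polynomial module $\ncp{\K}{Y}$ itself, not merely of its completion $\ncs{\K}{Y}$. This is exactly where the hypothesis that $\varphi$ is moderate is indispensable: by Theorem \ref{structure} it guarantees that $\pi_1(w)$, and hence $\phi_{\pi_1}$ and its adjoint, preserve $\KY$ and are invertible there. Consequently $\Sigma_w=(\phi_{\pi_1}^\vee)^{-1}(S_w)$ is a genuine polynomial, and, being the image of the linear basis $\{S_w\}_{w\in Y^*}$ under a linear automorphism, the family $\{\Sigma_w\}_{w\in Y^*}$ is a linear basis of $\KY$, i.e. of the $\varphi$-shuffle algebra.

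For point 4: since $\phi_{\pi_1}^\vee\colon(\KY,\ministuffle_{\varphi})\to(\KY,\shuffle)$ is an \emph{algebra} isomorphism carrying $\Sigma_l$ to $S_l$ for every $l\in\Lyn Y$, the claim is the transport of the classical Radford theorem \cite{radford}: since $\{S_l\}_{l\in\Lyn Y}$ is a pure transcendence basis of the shuffle algebra, its preimage $\{\Sigma_l\}_{l\in\Lyn Y}$ is a pure transcendence basis of $(\KY,\ministuffle_{\varphi},1_{Y^*})$. Concretely, the isomorphism turns Schützenberger's multiplicativity $S_w=\frac{1}{i_1!\cdots i_k!}S_{l_1}^{\minishuffle i_1}\shuffle\cdots\shuffle S_{l_k}^{\minishuffle i_k}$ into $\Sigma_w=\frac{1}{i_1!\cdots i_k!}\Sigma_{l_1}^{\ministuffle_{\varphi} i_1}\ministuffle_{\varphi}\cdots\ministuffle_{\varphi}\Sigma_{l_k}^{\ministuffle_{\varphi} i_k}$, which together with point 3 exhibits the $\ministuffle_{\varphi}$-monomials in the $\Sigma_l$ as a basis. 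The step that carries the real weight is the polynomiality underlying points 3 and 4: a priori the dual family lives in $\ncs{\K}{Y}$, and it is precisely moderateness of $\varphi$ (via the polynomial form $(P)$ of $\pi_1(y)$ in Theorem \ref{structure}) that keeps $\phi_{\pi_1}^\vee$ an automorphism of the polynomial algebra; once this is secured, the rest is routine transport.
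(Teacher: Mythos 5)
Your proof is correct and follows exactly the route the paper intends: Theorem \ref{bases} is stated without a separate proof precisely because it is the immediate transport, via the automorphism $\phi_{\pi_1}$ and its adjoint from Proposition \ref{phi-pi1-adj}, of the classical PBW--Lyndon, Sch\"utzenberger and Radford bases of the shuffle bialgebra, which is what you spell out. Your added emphasis on moderateness being what keeps $\phi_{\pi_1}^\vee$ an automorphism of the polynomial algebra (so that the $\Sigma_w$ are polynomials, not series) matches the paper's remark preceding Proposition \ref{phi-pi1-adj}.
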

The first terms of these families, for the $q$-stuffle (see \eqref{pi_l0} and \eqref{recurY-bis}) can be found in \cite{Bui_phd}.

\subsection{Local coordinates by $\varphi$-extended Sch\"utzenberger factorization}
We have observed very early ($\varphi$ needs only to be associative)
that the set of group-like series (for
$\Delta_{\ministuffle_{\varphi}}$) forms a (infinite-dimensional Lie)
group (see Lemmas~\ref{lemme2} and \ref{lemme3}), its Lie algebra is
the (Lie) algebra of Lie series, and we have a nice log-exp
correspondence (see Lemma~\ref{lemme2}). We will see in this 
paragraph
that, when $\varphi$ possesses all the ``good'' properties (moderate,
dualizable, associative and commutative), we have an analogue of the
Wei--Norman theorem~\cite{Bui,weinorman1,weinorman2} which gives a
system of local coordinates for every finite-dimensional (real or
complex) Lie group. Let us recall it here.
\begin{theorem}[\cite{Bui,weinorman1,weinorman2}]
Given a (finite-dimensional) Lie group $G$ (real ${\bf k}=\mathbb{R}$ or complex ${\bf k}=\mathbb{C}$), its Lie algebra $\mathfrak{g}$, and a basis $B=(b_i)_{1\leq i\leq n}$ of $\mathfrak{g}$, there exists a neighbourhood $W$ of\/ $1_G$ (in $G$) and $n$ {\it local coordinate} analytic functions 
$$
W\rightarrow {\bf k},\ (f_i)_{1\leq i\leq n}
$$
such that, for all $g\in W$, we have
\begin{equation*}
g=\prod_{1\leq i\leq n}^{\rightarrow}  e^{t_i(g)b_i}=e^{t_1(g)b_1}e^{t_2(g)b_2}\dots e^{t_n(g)b_n}.
\end{equation*}
\end{theorem}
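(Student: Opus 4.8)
The plan is to exhibit the desired factorization as the inverse of a single analytic \emph{product-of-exponentials} map and then to invoke the inverse function theorem; the bracket structure of $\mathfrak{g}$ plays no role, only the fact that $(b_i)$ is a linear basis. First I would introduce the map
$$
\Phi:{\bf k}^n\longrightarrow G,\qquad \Phi(s_1,\dots,s_n)=e^{s_1b_1}e^{s_2b_2}\cdots e^{s_nb_n}.
$$
Because the exponential $\exp:\mathfrak{g}\to G$ is analytic (holomorphic when ${\bf k}=\mathbb{C}$) and the multiplication of $G$ is analytic, $\Phi$ is analytic on all of ${\bf k}^n$, with $\Phi(0)=1_G$.

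Next I would compute the differential $d\Phi_0:{\bf k}^n\to T_{1_G}G=\mathfrak{g}$. Restricting $\Phi$ to the $i$-th coordinate axis gives the curve $s\mapsto e^{sb_i}$ (all other factors being $1_G$), whose derivative at $s=0$ is $b_i$; hence $d\Phi_0(e_i)=b_i$, where $(e_i)$ is the standard basis of ${\bf k}^n$. Since $(b_i)_{1\le i\le n}$ is by hypothesis a basis of $\mathfrak{g}=T_{1_G}G$, the linear map $d\Phi_0$ carries a basis to a basis and is therefore invertible.

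By the real- (resp. complex-) analytic inverse function theorem, $\Phi$ then restricts to an analytic diffeomorphism from some open neighbourhood $U$ of $0\in{\bf k}^n$ onto an open neighbourhood $W:=\Phi(U)$ of $1_G$. Setting $t_i:=\mathrm{pr}_i\circ\Phi^{-1}:W\to{\bf k}$ — the $i$-th component of the analytic inverse — yields the sought local coordinate functions (the $f_i$ of the statement): for every $g\in W$ one has $\Phi(t_1(g),\dots,t_n(g))=g$, that is,
$$
g=e^{t_1(g)b_1}e^{t_2(g)b_2}\cdots e^{t_n(g)b_n},
$$
which is exactly the asserted ordered product.

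The argument has no genuine obstacle; the two points meriting care are the analyticity of $\Phi$ — standard for Lie groups, resting on analyticity of $\exp$ and of the group law, together with the holomorphic version of the inverse function theorem in the complex case — and the bookkeeping in the differential computation, where one observes that evaluating at $s=0$ collapses all the left and right translations so that only the central factor $e^{s_ib_i}$ survives and contributes $b_i$. In particular, no structural information about $\mathfrak{g}$ beyond $(b_i)$ being a linear basis is required, which is precisely why the infinite-dimensional analogue pursued in this paper must instead supply the factorization by hand through $\varphi$-extended Sch\"utzenberger's factorization.
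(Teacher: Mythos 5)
The paper offers no proof of this statement: it is quoted as a classical result with citations to Wei--Norman, so there is no in-paper argument to compare yours against. Your proof is the standard and correct one for local canonical coordinates of the second kind --- the map $\Phi(s_1,\dots,s_n)=e^{s_1b_1}\cdots e^{s_nb_n}$ is analytic, $d\Phi_0(e_i)=b_i$ carries the standard basis to the basis $(b_i)$ of $\mathfrak{g}$, and the (real or holomorphic) inverse function theorem yields the neighbourhood $W$ and the coordinate functions $t_i=\mathrm{pr}_i\circ\Phi^{-1}$. Your closing remark is also apt: the local statement uses only that $(b_i)$ is a linear basis, which is exactly why the infinite-dimensional analogue in the paper must be produced explicitly via the $\varphi$-extended Sch\"utzenberger factorization rather than by a transversality argument.
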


Now, we have seen that, if $\varphi$ is moderate, dualizable, associative and commutative, 
\begin{equation}
\calH_{{\ministuffle_{\varphi}}}=(\KY,{\tt conc},1_{Y^*},\Delta_{\ministuffle_{\varphi}},\epsilon)
\end{equation}
is isomorphic to the shuffle bialgebra algebra $(\KY,{\tt conc},1_{Y^*},\Delta_{\shuffle},\epsilon)$, 
therefore one can construct bases $\{\Pi_w\}_{w\in Y^*};\ \{\Sigma_w\}_{w\in
  Y^*}$ of $\ncp{\K}{Y}$ with the following properties:
\begin{enumerate}
\item the restricted family $\{\Pi_l\}_{l\in \Lyn Y}$  is a basis of $\calP=\mathrm{Prim}(\ncp{\K}{Y})$;
\item the whole basis is constructed by decreasing concatenation (see Definition~\ref{Pi}) and hence of type PBW;
\item they are in duality $\scal{\Pi_u}{\Sigma_v}=\delta_{{u,v}}$;
\item due to these three properties, we have
\begin{eqnarray}
\Sigma_w=\frac{\Sigma_{l_1}^{\stuffle i_1}\stuffle\cdots\stuffle\Sigma_{l_k}^{\stuffle i_k}}{i_1!\cdots i_k!},&&
\mbox{for }w=l_1^{i_1}\ldots l_k^{i_k}.
\end{eqnarray} 
\end{enumerate}

Now, within the algebra of double series (whose support is $\K^{Y^*\otimes Y^*}$) endowed with the law $\stuffle_{\varphi}\hat{\otimes}\conc$, M.-P. Sch\"utzenberger (see \cite{reutenauer}) gave the beautiful formula 
\begin{equation}\label{Sch_fact}
\sum_{w\in Y^*}w\otimes w=\prod_{l\in \Lyn Y}^{\searrow} e^{\Sigma_l\hat{\otimes} P_l},
\end{equation}
which can be used to provide a system of local coordinates on the {\it
  Hausdorff group}, i.e., the group of series in $\ncs{\K}{Y}$ which
are group-like for $\Delta_{\ministuffle_{\varphi}}$. 
Indeed, due to the fact that for a  group-like $S$, $(S\hat{\otimes}\mathrm{Id})$ is compatible with the law of the double algebra, we get\footnote{All summabilities can be checked from the fact that $\varphi$ is moderate.}, applying the operator $(S\hat{\otimes}\mathrm{Id})$ to (\ref{Sch_fact}),
\begin{equation}
S=(S\hat{\otimes}\mathrm{Id})(\sum_{w\in Y^*}w\hat{\otimes} w)=\prod_{l\in \Lyn Y}^{\searrow} e^{\scal{S}{\Sigma_l}\;P_l},
\end{equation}
which is the perfect analogue of the theorem of Wei and Norman for the Hausdorff group (group of group-like series).  

\section{Conclusion}
In this paper, we have systematically studied the deformations 
of the shuffle product by addition of a superposition term. 
Fortunately, this study provides necessary and sufficient 
conditions for the objects (antipode, Ree ideal, bases in duality) and operators 
(infinite convolutional series, primitive projectors) to exist 
together with their consequences. We have established a local system 
of coordinates for the (infinite-dimensional) Lie group of group-like series. 
This system is the perfect analogue of the well-known theorem of Wei and Norman 
which holds for every finite-dimensional Lie group.

\end{document}